\documentclass[aos,preprint]{imsart}

\RequirePackage{amsthm,amsmath,amsfonts,amssymb,amsthm}
\RequirePackage[authoryear]{natbib} 
\RequirePackage[colorlinks,citecolor=blue,urlcolor=blue]{hyperref}
\RequirePackage{graphicx}
\usepackage{ifoddpage}

\startlocaldefs

\PassOptionsToPackage{unicode}{hyperref}
\PassOptionsToPackage{naturalnames}{hyperref}

\usepackage{amsmath,amssymb,amsfonts, mathrsfs, latexsym,natbib}
\usepackage{bbm}
\usepackage{makecell}
\def\nn{\mathbb{N}}

\usepackage[dvipsnames]{xcolor}
\newcommand*{\myfont}{}


\def\rr{\mathbb{R}}

\def\zz{\mathbb{Z}}

\def\nn{\mathbb{N}}

\def\xx{\mathbb{X}}

\def\argmin{\text{argmin}}

\def\1{\mathbbm{1}}

\allowdisplaybreaks
\usepackage{capt-of}

\renewcommand{\leq}{\leqslant} 
\renewcommand{\geq}{\geqslant}

\newtheorem{theorem}{Theorem}[section]
\newtheorem{lemma}[theorem]{Lemma}
\theoremstyle{remark}


\endlocaldefs

\begin{document}

\begin{frontmatter}
\title{Introduction to Generalized Fiducial Inference}

\begin{aug}
\author[A]{\fnms{Alexander} \snm{Murph}\ead[label=e1]{acmurph@live.unc.edu}},
\author[A]{\fnms{Jan} \snm{Hannig}\ead[label=e2]{jan.hannig@unc.edu}}
\and
\author[B]{\fnms{Jonathan} \snm{P Williams}\ead[label=e3]{jwilli27@ncsu.edu}}
\address[A]{Department of Statistics \& Operations Research,
University of NC at Chapel Hill,
\printead{e1,e2}}

\address[B]{Department of Statistics,
NC State University, Raleigh, NC,
\printead{e3}}
\end{aug}

\begin{abstract}
Fiducial inference was introduced in the first half of the 20th century by 
\cite{fisher1935} as a means to get a posterior-like distribution for a parameter without having to arbitrarily define a prior.  While the method originally fell out of favor due to non-exactness issues in multivariate cases, the method has garnered renewed interest in the last decade.  This is partly due to the development of generalized fiducial inference, which is a fiducial perspective on generalized confidence intervals: a method used to find approximate confidence distributions.  In this chapter, we illuminate the usefulness of the fiducial philosophy, introduce the definition of a generalized fiducial distribution, and apply it to interesting, non-trivial inferential examples.
\end{abstract}

%

\end{frontmatter}

\section{Introduction}

\label{sec:1}

Fiducial inference attempts to find a middle ground between the frequentist and Bayesian perspectives.
The fiducial argument allows one to fit a posterior-like distribution on a target parameter $\theta$ in a way that is entirely data-driven and does not rely on a sometimes arbitrary prior selection.  This argument is based on inverting a \textit{data generating algorithm (DGA)} that associates data $\textbf{Y}$ to the parameters $\theta$ and a random component $U$ with a known distribution $F_0$, e.g. a vector of independent and identically distributed (iid) standard uniforms or standard Gaussians. This is often expressed as the relation $\textbf{Y} = A(U, \theta)$. Since the DGA is a function of a random variable with a known distribution, it immediately determines the likelihood $f(\mathbf{y}|\theta)$.  By solving the DGA for $\theta$, we get the distribution of our parameters, called the {\em generalized fiducial distribution} (GFD), that is entirely data-driven and does not require the use of Bayes' Theorem.  

While the strengths and limitations of {\em generalized fiducial inference} (GFI) continue to be explored, its usefulness has already been illustrated in numerous practical applications.  Recent work has applied the fiducial ideas to bio-equivalence \citep{mcnally2003,hannig2006a}, metrology problems \citep{hannig2003,wang2006a,wang2006b,hannig2007,wang2012b}, inter-laboratory experiments and international key comparison experiments \citep{iyer2004}.  It has also been used to tackle statistical problems at the forefront of modern topics in statistical research, such as in wavelet regression \citep{hannig2009} and extreme value estimation \citep{wandler2012b}, and has recently led to a creative new perspective on linear model selection \citep{williams2019} and vector autoregressive graph selection \citep{williams2019b}.

{\myfont As a brief motivating example of the fiducial argument, we define the DGA of a single observation from a normal distribution with known mean, ${Y} = \mu + U$, where $U \sim \mathcal{N}(0, 1)$ is our random quantity with known distribution, assumed to be independent of $\mu$.  We ascribe a distribution on the parameter $\mu$ by inverting the DGA: $\mu = {Y} - U \sim \mathcal{N}(Y, 1)$.  While inverting the DGA for this simple normal example is algebraically simple, inverting a general DGA  could be non-trivial.}

{\myfont Heuristically, a smooth DGA $A(U,\theta)$ taken as a function of $\theta$ behaves locally like a linear function near our observed data value $\mathbf{y}$.  Therefore, for each realization of the random quantity $U^\star$, there is a well-defined point $\theta^\star$ so that $A(U^\star,\theta^\star)$ is closest to $\mathbf{y}$.  
GFD is calculated  as the distribution of $\theta^\star$ using the implicit function theorem from the distribution of $U^\star$ conditional on the event that $\{A(U^\star,\theta^\star)\approx \mathbf{y}\}$.}

Formally, we define the GFD as a limit in the following way.  Defining the pseudo-inverse of the DGA using the optimization problem
\begin{equation}\label{eq:FIDopt}
 Q_\textbf{y}(u)=\argmin_{\theta^{\star}} {\| \textbf{y} - A(u, \theta^{\star} ) \|}.
\end{equation}
Typically $\|\cdot\|$ is either $\ell_2$ or $\ell_\infty$ norm.  {This will serve to find the closest point $\theta$ described above.  While not technically an inverse, this quantity is always defined.}
Next, for each small $\epsilon>0$, define the random variable $\theta_\epsilon^\star=Q_\textbf{y}(U_\epsilon^{\star})$,
where $U_\epsilon^{\star}$ has  distribution $F_0$ {\bf truncated} to the set 
\begin{equation}\label{eq:truncate}
\mathcal M_{\mathbf{y},\epsilon} = \{ U_\epsilon^{\star} : \|\mathbf{y} - A(U_\epsilon^{\star}, \theta_\epsilon^\star ) \| = \| \mathbf{y} - A(U_\epsilon^{\star}, Q_\mathbf{y}(U_\epsilon^{\star}) ) \| \leq \epsilon \},
\end{equation}
{\myfont i.e., having the density $f_U(u^\star) I_{ M_{\mathbf{y},\epsilon}}(u^\star)/(\int_{M_{\mathbf{y},\epsilon}} f_U(u)\,du),$ where $f_U$ is the original density of $U$.}
Then assuming that the random variable $\theta_\epsilon^\star$ converges in distribution as $\epsilon\to 0$, the GFD is defined as the limiting distribution of $\lim_{\epsilon\to 0} \theta_\epsilon^\star$.

When the sampling distribution of $\mathbf{Y}$ is discrete, we can set $\epsilon=0$ and no limit is necessary.
When the sampling distribution of $\mathbf{Y}$ is continuous, {\myfont \cite{GenFid} use the implicit function theorem to calculate the limiting distribution of $\theta_\epsilon^\star$}, which leads to the following result.
\begin{theorem} 
	\label{Jacobian}
	Under mild conditions (see  Assumptions B.1-B.4 in \citealt{GenFid}), the limiting distribution above has density 
	\begin{equation}\label{eq:Jacobian}
r_{\mathbf{y}} (  { \theta } ) = \frac { f( \mathbf{y} |  { \theta } ) J ( \mathbf{y} ,  { \theta } ) } { \int  f \left( \mathbf{y}  |  { \theta } ^ { \prime } \right) J \left(\mathbf{y} ,  { \theta } ^ { \prime } \right) d  { \theta } ^ { \prime } },	    
	\end{equation}
	where 
	$J ( \mathbf{y} ,  { \theta } ) = D \left( { \nabla_{  { \theta } } } A\left. ( u ,  { \theta } ) \right| _ { u = A ^ { - 1 } ( \mathbf{y} ,  { \theta } ) } \right).$
	Here $\nabla_{  { \theta }} A(u,\theta)$ is the gradient matrix computed with respect to $\theta$, and $D$ is a determinant like operator that depends on the norm in \eqref{eq:FIDopt}, e.g., when we use $\ell_2$ norm $D( M ) = \left( \operatorname { det } M ^ { \prime } M \right)^{ \frac { 1 } { 2 } }$. 
\end{theorem}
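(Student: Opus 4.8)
The plan is to compute, for each $\epsilon>0$, the exact distribution of $\theta_\epsilon^\star$ as a ratio of truncated integrals and then pass to the limit. Throughout write $\theta\in\mathbb{R}^p$ and $\mathbf{y},U\in\mathbb{R}^n$ with $p\le n$, assume $A(\cdot,\theta)$ is a diffeomorphism in its first argument for each $\theta$, and let $f_U$ denote the density of $U$. By the push-forward definition $\theta_\epsilon^\star=Q_{\mathbf{y}}(U_\epsilon^\star)$ and the truncation \eqref{eq:truncate}, for any Borel set $B$,
\begin{equation}
\mathbb{P}(\theta_\epsilon^\star\in B)=\frac{\int_{\{u\,:\,Q_{\mathbf{y}}(u)\in B\}}\mathbf{1}_{\mathcal{M}_{\mathbf{y},\epsilon}}(u)\,f_U(u)\,du}{\int\mathbf{1}_{\mathcal{M}_{\mathbf{y},\epsilon}}(u)\,f_U(u)\,du}.
\end{equation}
Since $A(\cdot,\theta)$ is invertible, $g(\theta):=A^{-1}(\mathbf{y},\theta)$ is well-defined and traces out a $p$-dimensional manifold $\mathcal{S}_{\mathbf{y}}=\{g(\theta):\theta\}$ on which the residual $\mathbf{y}-A(u,Q_{\mathbf{y}}(u))$ vanishes; the truncation set $\mathcal{M}_{\mathbf{y},\epsilon}$ is a tubular neighborhood of $\mathcal{S}_{\mathbf{y}}$ shrinking to it as $\epsilon\to0$. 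Both integrals above therefore tend to $0$, and the theorem amounts to identifying their common leading-order factor together with the surviving integrand.

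First I would introduce, on a neighborhood of $\mathcal{S}_{\mathbf{y}}$, adapted coordinates $(\theta^\star,\mathbf{v})$ where $\theta^\star=Q_{\mathbf{y}}(u)$ and $\mathbf{v}=\mathbf{y}-A(u,\theta^\star)$ is the optimal residual. For the $\ell_2$ norm the first-order optimality condition $\nabla_\theta A(u,\theta^\star)^\top(\mathbf{y}-A(u,\theta^\star))=0$ forces $\mathbf{v}$ into the $(n-p)$-dimensional normal space $N_{\theta^\star}=\mathrm{range}(\nabla_\theta A(u,\theta^\star))^{\perp}$, so writing $\mathbf{v}=E\mathbf{z}$ for an orthonormal basis $E$ of $N_{\theta^\star}$ makes $(\theta^\star,\mathbf{z})\in\mathbb{R}^p\times\mathbb{R}^{n-p}$ a genuine chart. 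Differentiating $A(u,\theta^\star)=\mathbf{y}-E\mathbf{z}$ and $A(g(\theta),\theta)=\mathbf{y}$ shows the Jacobian of the inverse chart $(\theta^\star,\mathbf{z})\mapsto u$, evaluated at $\mathbf{z}=0$, is the $n\times n$ matrix $[\,\nabla_\theta g\mid-(\nabla_u A)^{-1}E\,]$. Factoring out $(\nabla_u A)^{-1}$ and using $\nabla_u A\,\nabla_\theta g=-\nabla_\theta A$, its determinant equals, up to sign, $\det[\,\nabla_\theta A\mid E\,]/\det(\nabla_u A)$; and because the columns of $E$ are orthonormal and orthogonal to $\mathrm{range}(\nabla_\theta A)$, $|\det[\,\nabla_\theta A\mid E\,]|=(\det(\nabla_\theta A^\top\nabla_\theta A))^{1/2}=J(\mathbf{y},\theta)$. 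Integrating $f_U$ over the cross-section $\|\mathbf{z}\|\le\epsilon$ then contributes a factor $c_{n-p}\epsilon^{n-p}(1+o(1))$ common to numerator and denominator, leaving the surviving integrand (with all derivatives evaluated at $u=A^{-1}(\mathbf{y},\theta)$)
\begin{equation}
f_U\big(A^{-1}(\mathbf{y},\theta)\big)\,\frac{J(\mathbf{y},\theta)}{\big|\det\nabla_u A\big|}=f(\mathbf{y}|\theta)\,J(\mathbf{y},\theta),
\end{equation}
where the last step is the change-of-variables identity $f(\mathbf{y}|\theta)=f_U(A^{-1}(\mathbf{y},\theta))/|\det\nabla_u A|$ for the likelihood. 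Normalizing recovers \eqref{eq:Jacobian}.

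The hard part will be turning this formal computation into a rigorous limit, which is exactly the role of Assumptions B.1--B.4. I would need $A\in C^1$ with $\nabla_u A$ invertible so that $g$, the manifold $\mathcal{S}_{\mathbf{y}}$, and the likelihood all exist; uniqueness and smoothness of the minimizer $Q_{\mathbf{y}}(u)$ near $\mathcal{S}_{\mathbf{y}}$, obtained by applying the implicit function theorem to the optimality condition under a nondegenerate second-order condition; and continuity of $f_U$ so that replacing the integrand by its value at $\mathbf{z}=0$ costs only $o(1)$, justified by dominated convergence over the shrinking tube together with the tubular-neighborhood theorem guaranteeing that $(\theta^\star,\mathbf{z})\mapsto u$ is a diffeomorphism and that mass outside the tube is negligible. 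Finally, for a general norm the residual no longer lies in the orthogonal complement but in the appropriate subdifferential cone, and the shape of the $\epsilon$-cross-section changes; this is precisely what the norm-dependent determinant-like operator $D$ absorbs, with $D(M)=(\det M^\top M)^{1/2}$ reproducing the $\ell_2$ computation above.
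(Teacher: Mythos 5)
The paper does not actually prove Theorem \ref{Jacobian}; it imports it from \citealt{GenFid}, whose proof proceeds exactly as you outline --- write the law of $\theta_\epsilon^\star$ as a ratio of integrals over the shrinking set $\mathcal M_{\mathbf y,\epsilon}$, change variables to $(\theta^\star,\mathbf z)$ via the implicit function theorem, and identify the surviving integrand $f(\mathbf y\,|\,\theta)\,J(\mathbf y,\theta)$ once the common $O(\epsilon^{n-p})$ cross-sectional factor cancels from numerator and denominator. Your sketch is correct and takes essentially the same route, including correctly locating the role of Assumptions B.1--B.4 (smoothness and invertibility of $A$, uniqueness of the minimizer, and uniform control justifying the dominated-convergence step).
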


The generalized fiducial approach communicates a simple algorithm: when possible, define a DGA that expresses the relationship between the data, the parameter, and a random quantity, then invert it.  The application of this idea in practice, however, can be nuanced and requires careful thought.  Our aim in this chapter is to provide a comprehensive overview of GFI by way of detailed examples.  {\myfont As we will illustrate below, in some examples inverting a DGA does not lead to a single point associated with a given $U$ draw from $F_0$. In this case we will find a fiducial distribution on the sets of parameters, cf. Demspter-Shafer theory \citep{dempster}}

In particular, we will treat two instances of multivariate normal data and a binomial distribution with unknown number of trials. We selected these examples both because they are of interest in their own right, but also because they will allow us to demonstrate how to implement a generalized fiducial solution using modern computational tools.
The computer codes are available at a GitHub repository [https://github.com/sirmurphalot/IntroductionGFI].  {\myfont For the binomial algorithms, this GitHub page also includes full pseudocodes of the implementations.}

{\myfont Throughout this paper, numerous approximate fiducial confidence intervals are reported.  Although we use the term \textit{confidence}, our approximate fiducial confidence intervals are closely related to the notion of the Bayesian \textit{credible} interval.  Each of these intervals involve defining a probability distribution on a target parameter and using this distribution to calculate a set whose probability mass matches a researcher's desired level of coverage.  Rather than call the fiducial intervals credible intervals, which are specifically a Bayesian construct, we refer to them as approximate confidence intervals.  

The caveat that these intervals are \textit{approximate} is necessary for numerous reasons.  For the continuous multivariate normal problems, it is necessary to perform Markov Chain Monte Carlo (MCMC) sampling, an estimation method, to circumvent the need to calculate the untenable marginal integral in the denominator of Equation \eqref{eq:Jacobian}.  MCMC sampling is also necessary to explore the sample space for the binomial problem for $n$ and $p$ unknown.  In addition, for both of the binomial problems, approximation is necessary to address the issue of an unbounded sample space on the $n$ parameter.  While all these sources of uncertainty merit the use of the term ``approximate," the primary reason we refer to these intervals in this way is because the coverages need not be exact.  This ``close but not exact'' coverage is analogous to the caveats one imposes on frequentist confidence intervals based on asymptotic normality.  We address these multiple sources of approximation in this paper using simulation studies to assess whether the performance of our computational strategies (the \textit{empirical coverage}) achieve our theoretical expectations (the \textit{nominal coverage}).  The results of these simulations show that, regardless of these approximations, the fiducial method is justified for practical use.
}

\section{Multivariate Normal Distribution}
\subsection{Multivariate Normal Data; \texorpdfstring{$\mu$}{} unknown, \texorpdfstring{$\Sigma$}{} unknown}

The estimation of covariance matrices is a fundamental problem in many multivariate methods.  Examples include discriminant data analysis, longitudinal data analysis, time series analysis, and spatial data analysis, just to name a few.  However, only recently it was pointed out that in the Bayesian context the most commonly used conjugate inverse Wishart prior {\myfont may not be the best choice for estimation}  \citep{BergerSunSong2018a,  BergerSunSong2018b, YangBerger1994}. {\myfont In particular, the inverse Wishart posterior has the effect of forcing the eigenvalues of the covariance matrix apart.  This ``systemic distortion" of the eigenstructure of the covariance matrix has negative effects on estimation, and has been studied extensively.  For further details on this problem, and a survey of proposed solutions, we suggest the papers by \cite{perron1993} and \cite{YangBerger1994}.}

Let $\mathbf Y_1, \mathbf Y_2, \dots, \mathbf Y_m$ be iid  $\mathcal{N}_d( \mu, \Sigma)$, where $\mu$ is a $d$-dimensional vector and $\Sigma$ is a $d \times d$ covariance matrix.  Our aim is to perform inference on the covariance matrix $\Sigma$ with the unknown mean parameter $\mu$.
To date there have been proposed two basic approaches to define the GFD for this model. They both start with the DGA,
\[
 \mathbf{Y}_i=\mu + B \mathbf{U}_i, \quad i=1,\ldots,n,
\]
where $\mathbf U_i$ are iid standard Gaussian vectors of dimension $d$, but differ in the structure of the matrix $B$. \cite{wandler2011} use $B$ that is a lower triangular matrix. This leads to GFD that depends on the arbitrary order of the coordinates.  \cite{ShiHannigLaiLee2017} propose using an arbitrary $d\times d$ matrix $B$, which removes the dependence on the coordinate order but is overparametrized.  {\myfont In the second case, the resulting GFD for the covariance matrix $\Sigma=BB^\top$ belongs to the Wishart family.   Several other reasonable choices for $B$ lead to a GFD belonging to the Wishart family:  $B = \Sigma^{1/2} = Z\Lambda Z^\top$,  where $Z$ is an orthogonal matrix, $\Lambda$ is a diagonal matrix with positive entries on the diagonal, or the lower triangular $B$ with a DGA  for the sufficient statistic, i.e., the sample mean and sample covariance modeled by $\bar{\mathbf{Y}} = \mu + n^{-1} BU_1,\ S^2 = n^{-1} B U_2 B^\top$ respectively, where $U_1$ is $d$-variate standard normal and $U_2$ has Wishart distribution $W_d(I,n-1)$.}

We propose an alternative DGA that does not lead to the Wishart distribution and is not overparameterized. In particular, consider $B= Z\Lambda$, where $Z$ is an orthogonal matrix, $\Lambda$ is a diagonal matrix with positive entries on the diagonal. Consequently, the covariance matrix is $\Sigma=Z\Lambda^2Z^\top$. To be able to compute the GFD using  \eqref{eq:Jacobian} we will need to reparametrize $Z$ using the {\it Cayley transformation} (see Theorem \ref{Cayley_transform}). In particular we will use the following two facts. For the proof of the first fact see, for instance, \cite{howard1996}.

\begin{theorem}[Cayley transform]\label{Cayley_transform}
Every real orthogonal matrix $Z$ that does not have -1 as a characteristic root can be expressed as
\begin{equation}\label{eq:ourU} 
Z = (I_d-A)(I_d+A)^{-1}=(I_d+A)^{-1}(I_d-A) \end{equation}
by a suitable choice of a real skew-symmetric matrix, i.e., $A^\top=-A$.
\end{theorem}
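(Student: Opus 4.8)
The plan is to produce the required skew-symmetric matrix explicitly by inverting the proposed relation, and then to check that it has the claimed properties. Solving $Z = (I_d - A)(I_d + A)^{-1}$ formally for $A$ motivates the choice
\[
A = (I_d + Z)^{-1}(I_d - Z).
\]
This is well-defined precisely because $Z$ has no characteristic root equal to $-1$, so $I_d + Z$ is invertible. The key structural observation I would record first is that $A$ is a rational function of $Z$: hence $A$ commutes with $Z$, and the factors $(I_d+Z)^{-1}$ and $(I_d-Z)$ commute with each other, so that equivalently $A = (I_d - Z)(I_d + Z)^{-1}$. This commutativity is what makes every subsequent manipulation go through cleanly.

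First I would verify skew-symmetry. Taking transposes and invoking orthogonality in the form $Z^\top = Z^{-1}$ gives
\[
A^\top = (I_d - Z^{-1})(I_d + Z^{-1})^{-1}.
\]
Factoring $Z^{-1}$ out of each parenthesis (using $I_d - Z^{-1} = -Z^{-1}(I_d - Z)$ and $(I_d + Z^{-1})^{-1} = (I_d+Z)^{-1}Z$) turns this into $A^\top = -Z^{-1}(I_d - Z)(I_d + Z)^{-1} Z = -Z^{-1} A Z$, and since $A$ commutes with $Z$ the conjugation collapses, leaving $A^\top = -A$.

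Next I would recover $Z$ from $A$ while simultaneously confirming that $I_d + A$ is invertible. A direct computation from the definition gives the two identities
\[
I_d + A = 2(I_d + Z)^{-1}, \qquad I_d - A = 2(I_d + Z)^{-1} Z,
\]
so that $(I_d + A)^{-1} = \tfrac{1}{2}(I_d + Z)$ exists. Substituting these into the products $(I_d - A)(I_d + A)^{-1}$ and $(I_d + A)^{-1}(I_d - A)$, and again using that $Z$ commutes with $I_d + Z$, returns $Z$ in both orderings, which yields the two displayed representations at once.

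The one step that demands genuine care --- rather than being purely formal bookkeeping with the invertible matrix $I_d + Z$ --- is the skew-symmetry verification: it is the only place where the orthogonality hypothesis $Z^\top = Z^{-1}$ is actually used, and it succeeds only because orthogonality is combined with the commutativity of $A$ and $Z$. I would also note in passing that this construction is inverse to the forward Cayley map, so the chosen $A$ is the unique skew-symmetric matrix representing $Z$, although uniqueness is not needed for the stated existence claim.
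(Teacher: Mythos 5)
Your proof is correct and complete: the explicit construction $A=(I_d+Z)^{-1}(I_d-Z)$, the commutativity observation, the skew-symmetry check via $Z^\top=Z^{-1}$, and the identities $I_d+A=2(I_d+Z)^{-1}$, $I_d-A=2(I_d+Z)^{-1}Z$ all hold. The paper itself gives no proof and simply defers to the cited reference (\cite{howard1996}), whose argument is essentially the same standard inversion of the Cayley map that you carry out, so there is nothing substantive to compare beyond noting that your write-up is self-contained where the paper is not.
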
    

\begin{theorem}[\citealt{odorney2014}]
    For any orthogonal matrix $Z$ there must exist a \textit{signature matrix} $D$ such that $ZD$ does not have -1 as a characteristic root and all elements of corresponding Cayley transform $a_{i,j}$ are such that $|a_{i,j}|\leq 1$ for $1\leq i\leq j \leq d$.  
\end{theorem}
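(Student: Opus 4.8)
\section*{Proof proposal}

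The plan is to select the signature matrix $D$ by an extremal principle and then read the bound off the optimality condition. First I would translate the desired conclusion into a statement about the inverse of $S := I_d + ZD$. Using the Cayley relation $ZD = (I_d-A)(I_d+A)^{-1}$ from Theorem \ref{Cayley_transform}, one computes $I_d+ZD = 2(I_d+A)^{-1}$, hence $A = 2S^{-1}-I_d$. Since $A$ is skew-symmetric this forces $(S^{-1})_{kk}=\tfrac12$ for every $k$ and gives $a_{k\ell} = 2(S^{-1})_{k\ell}$ for $k\ne\ell$. Thus the target inequality $|a_{k\ell}|\le 1$ is equivalent to $|(S^{-1})_{k\ell}|\le\tfrac12$, while the diagonal case $i=j$ is automatic because $a_{ii}=0$.

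Next I would let $D^\star$ be a signature matrix maximizing $\det(I_d+ZD)$ over the $2^d$ choices of $D$. To guarantee that the maximizer is admissible (that is, $-1$ is not a characteristic root of $ZD^\star$, so Theorem \ref{Cayley_transform} applies), I would expand $\det(I_d+ZD) = \sum_{S\subseteq\{1,\dots,d\}}\big(\prod_{j\in S} d_j\big)\det(Z_{S,S})$ by multilinearity in the columns, where $Z_{S,S}$ is the principal submatrix of $Z$ on the index set $S$. Averaging this expression over all sign patterns kills every term with $S\ne\emptyset$, leaving average value $1$; hence $\det(I_d+ZD^\star)\ge 1>0$, so $S=I_d+ZD^\star$ is invertible and the Cayley transform $A$ of $ZD^\star$ is a well-defined skew-symmetric matrix.

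The heart of the argument is a two-coordinate perturbation. Fixing $k\ne\ell$ and flipping the signs $d_k,d_\ell$ produces a signature matrix $D''$ with $I_d+ZD'' = S - UV^\top$, a rank-two update where $V=[\,e_k\ e_\ell\,]$ and $U=[\,2d_k z_k\ \ 2d_\ell z_\ell\,]$, with $z_j$ the $j$-th column of $Z$. Using $d_j z_j = Se_j - e_j$ together with $(S^{-1})_{kk}=(S^{-1})_{\ell\ell}=\tfrac12$, the Sylvester/Weinstein--Aronszajn identity $\det(S-UV^\top)=\det(S)\det(I_2 - V^\top S^{-1}U)$ reduces the $2\times 2$ corrector to $-4(S^{-1})_{k\ell}(S^{-1})_{\ell k}$, which equals $a_{k\ell}^2$ by skew-symmetry. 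Hence $\det(I_d+ZD'') = a_{k\ell}^2\,\det(S)$.

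Finally, maximality of $D^\star$ yields $a_{k\ell}^2\det(S) = \det(I_d+ZD'') \le \det(I_d+ZD^\star) = \det(S)$, and dividing by $\det(S)>0$ gives $a_{k\ell}^2\le 1$, as required. I expect the main obstacle to be the bookkeeping in the rank-two update, in particular verifying the clean collapse of the $2\times2$ corrector to $a_{k\ell}^2$, together with confirming that the extremal matrix is genuinely admissible. The averaging identity handles admissibility, and the normalization $(S^{-1})_{kk}=\tfrac12$ is precisely what makes the collapse work.
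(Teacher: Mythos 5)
The paper does not actually prove this statement: it is quoted from \cite{odorney2014} and used as a black box, so there is no internal proof to compare against. On its own merits, your argument is correct and complete. The reduction $S:=I_d+ZD=2(I_d+A)^{-1}$, hence $A=2S^{-1}-I_d$, is right, and skew-symmetry of $A$ does force $(S^{-1})_{kk}=\tfrac12$ and $a_{k\ell}=2(S^{-1})_{k\ell}$. The multilinear expansion $\det(I_d+ZD)=\sum_{S\subseteq\{1,\dots,d\}}\bigl(\prod_{j\in S}d_j\bigr)\det(Z_{S,S})$ averages to $1$ over the $2^d$ sign patterns, so the maximizer $D^\star$ satisfies $\det(I_d+ZD^\star)\geq 1>0$, which simultaneously certifies admissibility (no eigenvalue $-1$) and positivity of the determinant you later divide by. The rank-two update checks out: with $d_jz_j=Se_j-e_j$ one gets $V^\top S^{-1}U=\bigl(\begin{smallmatrix}1&-2(S^{-1})_{k\ell}\\-2(S^{-1})_{\ell k}&1\end{smallmatrix}\bigr)$, so $\det(I_2-V^\top S^{-1}U)=-4(S^{-1})_{k\ell}(S^{-1})_{\ell k}=4(S^{-1})_{k\ell}^2=a_{k\ell}^2$ by skew-symmetry, and maximality gives $a_{k\ell}^2\leq 1$. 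This is the natural extremal argument for the result (and, as far as I can tell, essentially the one in the cited source); if you wanted to present it as a fully standalone proof, the only thing worth adding explicitly is the one-line verification that $A=(I_d-ZD^\star)(I_d+ZD^\star)^{-1}$ is skew-symmetric for orthogonal $ZD^\star$, since both the diagonal normalization and the collapse of the $2\times 2$ corrector rest on it.
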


Recall that $D$ is a diagonal matrix with entries $\pm 1$. Consequently, $ZD\Lambda^2 DZ^\top=Z\Lambda^2 Z^\top$ which is invariant to the choice of $D$.
This leads us to propose the following data generating algorithm
\begin{equation}\label{eq:MVNDGA}
 \mathbf Y_i = \mu + (I_d-A)(I_d+A)^{-1}\Lambda \mathbf U_i, \quad i=1,\ldots,n, 
\end{equation}
where $A$ is a skew-symmetric matrix with all entries $|a_{ij}|\leq 1$ and $\Lambda$ is a diagonal matrix with positive entries $\lambda_i>0$.

We show in Appendix~\ref{a:MVN} that the 
Jacobian is $J(\mathbf y,\theta)=J^*(\mathbf y,A) \prod_{i=1}^d\lambda_i^{-1} $, where $J^*(\mathbf y,A)$ does not depend on $\mu$ or $\lambda$. 
The form of this Jacobian allows us to simplify the generalized fiducial density from Equation \eqref{eq:Jacobian}. {\myfont In particular, calculations in Appendix~\ref{a:MVN} show that the marginal GFD of \textit{the unique entries of} $A$ is\begin{equation}\label{eq:MVNmargA}
r_{\mathbf y}(\operatorname{veck}(A) ) \propto J^*(\mathbf y,A) \prod_{i=1}^d( Z^T   nS^2 \, Z)_{ii}^{\frac{-(n-1)}{2}},
\end{equation}
where $S^2=\frac{1}{n}\sum_{i=1}^n(\mathbf Y_i-\mathbf{\bar Y})(\mathbf Y_i-\mathbf{\bar Y})^\top$ and $\mathbf{\bar Y}=n^{-1}\sum_{i=1}^n \mathbf Y_i$.  The $\operatorname{veck}(A)$ operation vectorizes the strictly lower triangular elements of the skew-symmetric matrix $A$, as discussed in \cite{henderson1979}.  The matrix $Z$ is defined as a function of $\operatorname{veck}(A)$ in Equation \eqref{eq:ourU}.  The conditional GFD of the diagonal entries of $\Lambda^{-2}$ given the $\operatorname{veck}(A)$ follow independent gamma distributions, $\lambda_i^{-2}\sim$ Gamma$\left(\frac{n-1}2,\frac{ (Z^\top nS^2 \, Z)_{ii}}{2}\right),\ i=1,\ldots, d$.  The conditional GFD of $\mu$ given $\Lambda$ and $A$ is multivariate $\mathcal{N}\left(\mathbf{\bar Y},n^{-1}{Z\Lambda^2 Z^\top}\right)$.}  The simple form of this distribution allows us to implement sampling from the GFD using STAN \citep{STAN}.  
 
{\myfont Since $A$ is skew-symmetric, its diagonal terms must be zero, and the matrix is determined by the entries of the strict lower triangle of $A$. The number of free parameters $(A,\Lambda,\mu)$ in \eqref{eq:MVNDGA} is therefore ${d(d +3)}/{2}$, which matches the number of free equations in the minimal sufficient statistic $(\mathbf{\bar Y},S^2)$, where there are $d$ free equations in the vector $\mathbf{\bar Y}$ and $(d+1)d/2$ free equations from the lower triangle of $S^2$.} Consequently, our DGA is not overparameterized.  We do not impose an order restriction on the diagonal entries of $\Lambda$.  This makes the calculation of the entries $\lambda_i$ easier, but introduces non-uniqueness.  To address this issue, we run parallel chains of our MCMC algorithm each starting with their own random ordering of the singular values, obtained by using \textit{Principle Component Analysis} on the sample covariance matrix.

The STAN software \citep{STAN} allows us to quickly and easily draw samples from the GFD, which we can then use to obtain approximate fiducial confidence intervals for the true covariance matrix.  To build these fiducial confidence intervals and evaluate our method, we consider a number of distance metrics and parameters of interest.  For the distance metrics, fiducial confidence intervals are developed by defining a ball around the mean of the GFD that covers (1 - $\alpha$)\% of the GFD.  {\myfont This is done by calculating the distance from the average point for every value sampled from the GFD and defining the ball using a distance cutoff such that $(1-\alpha)\%$ of the sampled values have a distance from the average value that is less the chosen cutoff.} For parameters of interest, fiducial confidence intervals are developed much like they would be on the real line.  That is, after mapping every matrix from the GFD to the real line, we simply take the center $(1-\alpha)\%$ of the GFD as our fiducial confidence interval.

In particular, we construct the fiducial confidence intervals for the true covariance matrix using the following two distance metrics, two norms, and one parameter of interest. The norms are used as both distance metrics (for matrices $M,N: ||M - N||$) and as parameters of interest (for matrix $M: ||M||$).   Let $M,N$ be two positive-definite, symmetric matrices.  Then,
\begin{enumerate}
    \item $\textsc{FM\_Distance} \text{ \citep{Forstner2003}}: dist(M,N) = \sqrt{\sum_{i=1}^d \ln^2\lambda_i(M,N)}$, 
    where $\lambda_i(M,N)$ are the eigenvalues from the equation $\det(\lambda M - N) = 0$;
    \item $\textsc{Stein\_Loss}\text{ \citep{Konno1995}}:  dist(M,N) = \text{tr}(M^{-1}N) - \log(\det(M^{-1}N)) - d$;
    \item $\textsc{Spectral\_Norm}\text{ \citep{horn2012}}: norm(M) = \lambda_{(d)}$, the maximum value in the $\Lambda$ matrix;
    \item $\textsc{Frobenius\_Norm}\text{ \citep{horn2012}}: norm(M) = \sqrt{\text{tr}(M M^{H})}$, where $M^{H}$ is the conjugate transpose of $H$;
    \item $\textsc{LogDet} \text{ \citep{Fazel2003}}: huer(M) = \log(\det(M))$.
\end{enumerate}

Our simulation study consists of 1000 iterations on a $4 \times 4$ multivariate normal $\mathcal{N}_4(\mu, \Sigma)$ where
\[ \mu = \begin{pmatrix} 1 \\ 2 \\ 3 \\ 1 \end{pmatrix}, ~~\Sigma = \begin{pmatrix}4&1&0&0\\ 1&1&0&1\\ 0&0&9&1\\ 0&1&1&4 \end{pmatrix}. \]  
At each iteration, we simulate a dataset of $n=100$ observations from the given multivariate normal distribution. Then, we run 20 MCMC chains and use the simulated values from the GFD of the covariance matrix to construct approximate fiducial confidence intervals around the true covariance matrix $\Sigma$.  

\begin{figure}[ht]
  \checkoddpage
  \edef\side{\ifoddpage l\else r\fi}%
  \centering
  \makebox[\textwidth][c]{%
    \begin{minipage}[t]{0.59\textwidth}
      \centering
      \includegraphics[width=\linewidth]{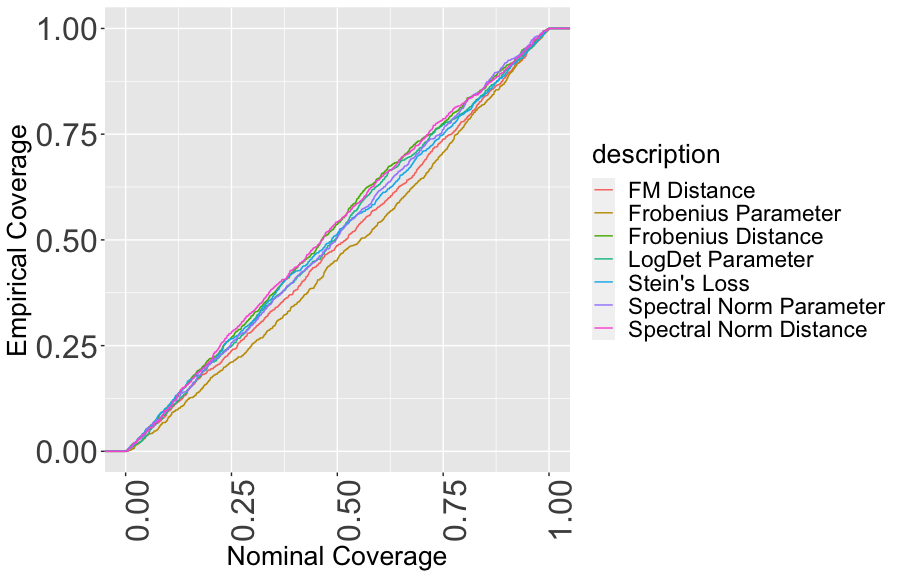}
  \captionof{figure}{The coverage of fiducial confidence intervals for the covariance matrix using different notions of distance.  This figure is created by setting different levels of coverage in our GFD fiducial confidence intervals and checking the empirical coverage of these intervals over the whole simulation.  A perfect diagonal line equates to a match between our observed (empirical) coverage and our theoretical (nominal) coverage.  {\myfont These lines show that our method approximately achieves the desired coverage.}}
  \label{fig:MVNQQs}
    \end{minipage}%
    \hspace*{1cm}
    \begin{minipage}[t]{0.59\textwidth}
      \centering
      \includegraphics[width=\linewidth]{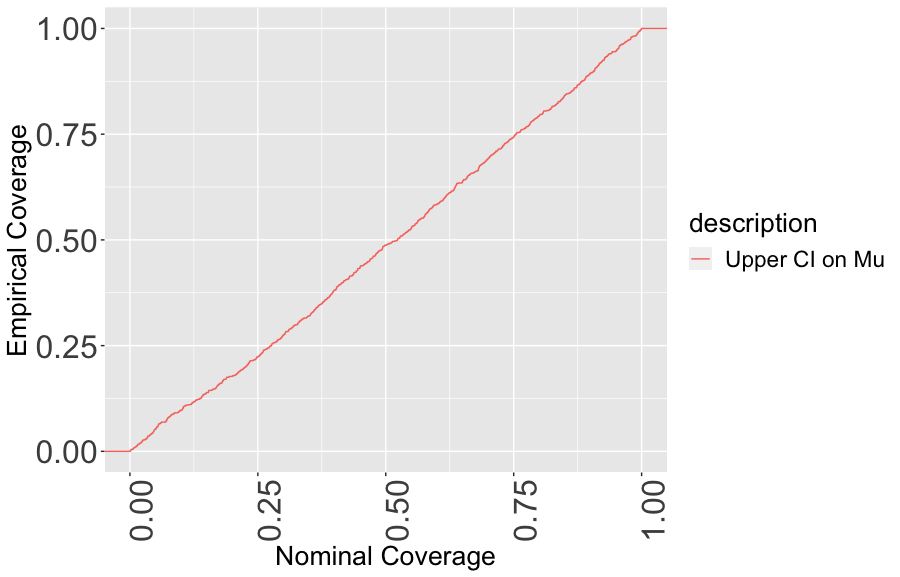}
  \captionof{figure}{The coverage of fiducial confidence intervals for the mean vector $\mu$ using standard euclidean distance.  This figure was made in the same way as figure \ref{fig:MVNQQs}.  {\myfont This line shows that our method approximately achieves the desired coverage.}}
  \label{fig:MVNMu}
    \end{minipage}%
  }%
\end{figure}

Figure \ref{fig:MVNQQs} shows the performance of these fiducial confidence intervals for different values of $\alpha$.  The nominal coverage corresponds to the value of $\alpha$ we set and the empirical coverage is the proportion of times (1 - $\alpha$) balls made this way contained the true covariance matrix.  As we can see, all of our metrics resulted in a line across the diagonal, communicating that nominal and empirical coverage match.  Table \ref{tab:MVN95Ints} shows the specific coverage of these fiducial confidence intervals for $\alpha = 0.05$.  As we can see, the empirical coverage probabilities of each of these intervals is approximately 95\%, as desired.

\begin{table}
\centering
\begin{minipage}{.9\textwidth}
  \centering
  \begin{center}
 \begin{tabular}{||c | c ||} 
 \hline
  Distance Metric/Norm & {\myfont Empirical Coverage of the 95\% Fiducial Confidence Interval} \\
 \hline
 $\textsc{FM\_Distance}$ & 0.9505 \\
 $\textsc{Loss1\_Distance}$ & 0.9463 \\
 $\textsc{LogDet}$ Parameter & 0.9589 \\
 $\textsc{Spectral\_Norm}$ Parameter & 0.9568 \\
 $\textsc{Frobenius\_Norm}$ Parameter & 0.9547\\
 $\textsc{Spectral\_Norm}$ Distance & 0.9540 \\
 $\textsc{Frobenius\_Norm}$ Distance & 0.9580 \\
 \hline
\end{tabular}
\end{center}
  \captionof{table}{The empirical coverage of fiducial confidence intervals for the covariance matrix using different notions of difference.  {\myfont For the distance metrics, fiducial confidence intervals are developed by defining a ball around the mean of the GFD that covers 95\% of the GFD.  For parameters of interest, fiducial confidence intervals are developed much like they would be on the real line.  That is, after mapping every matrix from the GFD to the real line, we simply take the center 95\% of the GFD as our fiducial confidence interval.  This table shows how our empirical coverage is approximately equal to our expected nominal coverage of 95\%. }}
  \label{tab:MVN95Ints} 
\end{minipage}
\centering
{\small \\} 
\end{table}

The simultaneous fiducial confidence interval for $\mu$ is constructed the same way as $\Sigma$ using the standard Euclidean distance $dist(u,v) = ||u - v||$.  Draws from the GFD for $\mu$ were calculated at the same time as the draws of $\Sigma$ in our simulation study.  Figure \ref{fig:MVNMu} shows that nominal coverage matched the empirical coverage, validating our method in the context of this inference problem.

\subsection{Generalization of one way random effects model}
Continuing to examine the multivariate normal problem, we consider an alternative, more restrictive parameterization, the standard unbalanced one-way random effect model $Y_{i,j} = \mu + \eta_{i} + \epsilon_{i,j}$, for $\eta_i\sim \mathcal{N}(0, \sigma_a^2),~\epsilon_{i,j} \sim \mathcal{N}(0, \sigma_e^2)$, where $i = 1, \dots, n$ and $j = 1,\dots, d$, and the $\eta_i$ and $\epsilon_{i,j}$ terms are independent.  The first fiducial solution to inference on this model was given by \cite{e2008} and used a {\myfont tailor-made} solution to this problem. Let us consider the more general DGA
\begin{equation}\label{eq:MMDGA}
\textbf{Y} = \xx \beta + A \mathbf{U},
\end{equation}
where $\beta$ is an unknown vector of fixed effects, $\mathbf{U}$ is standard normal random vector, $\xx$ is a fixed effect design matrix and $A = \Sigma^{1/2}$ such that $\Sigma = \sigma_\alpha^2 S_\alpha+ \sigma_e^2 I$ and $S_\alpha$ is a matrix of ones and zeros corresponding to the group sizes $n_1, n_2, \dots, n_m$.  Thus, our inferential problem is simplified to providing GFD for $\beta, \sigma_\alpha^2$, and $\sigma_e^2$.  

While \cite{e2008} were able to derive well-performing generalized fiducial intervals for $\sigma_\alpha^2$ and $\sigma_e^2$, they did so by way of a long calculation and they did not allow for simultaneous inference on the fixed effects $\beta$.  Using \eqref{eq:Jacobian} the generalized fiducial intervals can be implemented without major computational hassle using the STAN software, while simultaneously obtaining GFD for the fixed effects $\beta$ that is often of interest \citep{NeupertEtAl2020}.

Due to the structure imposed by  \eqref{eq:MMDGA}, the Jacobian matrix simplifies greatly and does not involve the square root of a complicated matrix.  Indeed, the function $J(\mathbf{Y}, S_\alpha, \sigma_\alpha^2,\sigma_e^2, \xx \beta)$ is computed from the $n\times (d+2)$ matrix obtained by column concatenation of
\[
  \nabla_\beta \textbf{Y} = \xx,\quad
  \frac{\partial \textbf{Y}}{\partial \sigma_e^2} = (\sigma_\alpha^2 S_\alpha+ \sigma_e^2 I)(\textbf{Y} - \xx \beta),\quad
  \frac{\partial \textbf{Y}}{\partial \sigma_\alpha^2} = S_\alpha (\sigma_\alpha^2 S_\alpha+ \sigma_e^2 I)(\textbf{Y} - \xx \beta).
\]
Derivation of the above quantities is outlined in Appendix \ref{a:MM}.  

We will draw samples from our GFD to create fiducial confidence intervals that we will then compare to the truth to assess performance, parallel the simulation study in \cite{e2008}. In particular, at each instance of our simulation, we generated data using seven different designs of group sizes outlined in Table \ref{tab:mmgroups} and one of the true parameter values $(\sigma_\alpha^2, \sigma_e^2)$ from the set $\{(.1,10),(.5,10),(1,10),(.5,2),(1,1),(2,.5),(5,.2),(10,.1)\}.$

\begin{table}
\centering
\begin{minipage}{.9\textwidth}
  \centering
  \begin{center}
 \begin{tabular}{||c c c c c c c c c||} 
 \hline
 Pattern & $\Phi$ &~& & & $n_i$ & & &  \\
 \hline
 \hline
 1 & .068 &~& 1 & 1 & 1 & 1 & 1 & 100 \\
 \hline
 2 & .130 &~& 2 & 2 & 2 & 2 & 2 & 100 \\
 \hline
 3 & .187 &~& 2 & 5 & 60 & & &  \\
 \hline
 4 & .410 &~& 4 & 4 & 4 & 8 & 48 &  \\
 \hline
 5 & .700 &~& 5 & 10 & 15 & 20 & 25 & 30  \\
 \hline
 6 & .807 &~& 2 & 2 & 4 & 6 & &  \\
 \hline
 7 & .957 &~& 6 & 6 & 8 & 8 & 10 & 10 \\
 \hline
\end{tabular}
\end{center}
  \captionof{table}{The different group assignments tested for simulation of mixed models, as suggested by \cite{e2008}.  Here, $\Phi$ is an expression of imbalance as defined by \cite{ahrens1981}.  The smaller the value of $\Phi$, the greater the degree of imbalance.}
  \label{tab:mmgroups} 
\end{minipage}
\centering
{\small \\} 
\end{table}

\begin{table}[ht]
  \checkoddpage
  \edef\side{\ifoddpage l\else r\fi}%
  \centering
  \makebox[\textwidth][c]{%
    \begin{minipage}[t]{1.3\textwidth}
    \centering
      \begin{tabular}{||c | c c c c c c c ||} 
 \hline
($\sigma^2_\alpha, \sigma^2_e) \backslash$ Pattern  & 1 & 2 & 3 & 4 & 5 & 6 & 7 \\
 \hline
 \hline
 (.1,10) & 34.06 (22.54) & 19.38 (13.44) & 88.16 (93.09) & 12.55 (10.49) & 4.00 (3.50) & 38.63 (39.64) & 7.06 (6.04) \\
 \hline
 (.5,10) & 35.84 (27.90) & 22.21 (16.12) & 102.62 (111.85) & 14.99 (12.67) & 5.92 (5.56) & 45.66 (44.97) & 9.01 (8.18) \\
 \hline
 (1,10) & 37.70 (30.91) & 23.89 (18.58) & 114.65 (123.98) & 18.77 (17.68) & 8.63 (8.59) & 49.28 (54.27) & 10.98 (9.79) \\
 \hline
 (.5,2) & 9.26 (7.26) & 6.18 (5.28) & 28.42 (34.99) & 5.38 (5.82) & 3.29 (2.85) & 12.88 (15.37) & 3.75 (3.23) \\
 \hline
 (1,1) & 7.62 (8.05) & 6.85 (6.34) & 33.84 (49.73) & 7.45 (8.04) & 5.58 (4.71) & 14.29 (18.18) & 5.61 (4.95) \\
 \hline
 (2,.5) & 11.33 (11.00) & 10.74 (10.05) & 52.34 (87.86) & 14.83 (15.00) & 10.18 (9.08) & 24.27 (28.87) & 10.02 (9.80) \\
 \hline
 (5,.2) & 24.47 (24.48) & 26.08 (23.21) & 140.51 (213.16) & 34.27 (34.80) & 24.96 (22.16) & 54.17 (64.35) & 24.79 (22.60) \\
 \hline
 (10,.1) & 49.16 (43.76) & 47.13 (44.41) & 279.18 (419) & 69.09 (67.37) & 52.37 (42.99) &  102.33 (135.72) & 47.55 (40.69) \\
 \hline
 
\end{tabular} \\
\caption{{\myfont Median (IQR) of the CI lengths for all our parameters.  Instances of outstanding lengths likely contribute to the conservative empirical coverage we see in Figure \ref{fig:RE_empCov}.  These results are comparable to the conclusions in \cite{e2008}.} }\label{tab:RE_CILen}
    \end{minipage}%
  }%
\end{table}

\begin{figure}[ht] 
  \begin{minipage}[b]{0.44\linewidth}
    \centering
    \includegraphics[width=\linewidth]{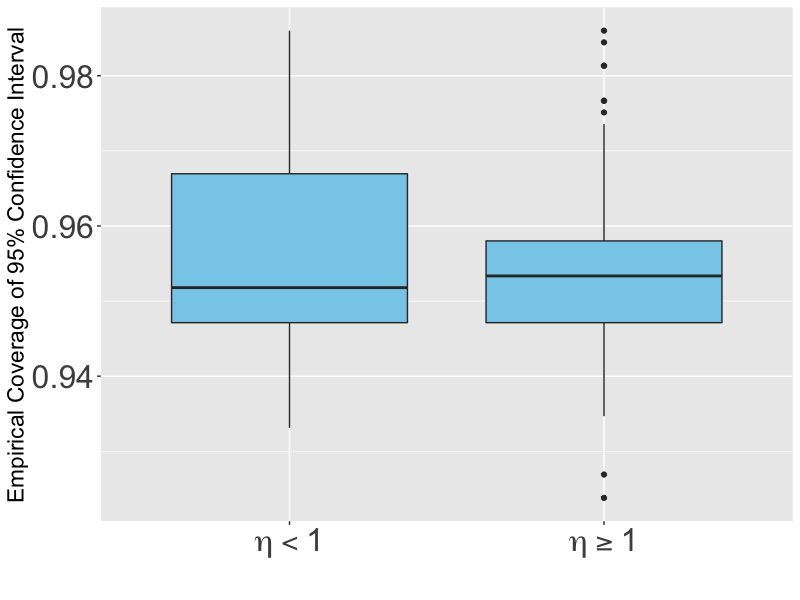} 
    \caption{Empirical coverages for $\eta = \sigma^2_\alpha/\sigma_e <1$ and $\eta \geq 1$ for both $\sigma_\alpha^2$ and $\sigma_e^2$ across all parameter choices {\myfont and all seven group choices.}  We can see that the generalized fiducial method tends to be more conservative for larger values of $\eta$.} \label{fig:RE_empCov}
  \end{minipage}\hspace{1mm}
\end{figure}

The results of our simulation can be seen in Figures~\ref{fig:RE_empCov} and {\myfont Table \ref{tab:RE_CILen}}.  Across many different parameter choices, it would appear that the generalized fiducial method performs well, with a slight conservative tendency for certain extreme parameter combinations.  Our findings match \cite{e2008}, who found this same conservative tendency for larger values of $\eta = \sigma_\alpha^2/ \sigma_e^2$ and greater degrees of imbalance, yet promising performance in all other instances.
We conclude by pointing out that we were able to efficiently implement our calculations using RStan \citep{STAN} which yields accurate results with a frugal use of iterations.


\section{Binomial Distribution}
\subsection{Binomial Distribution; \texorpdfstring{$n$}{} known, \texorpdfstring{$p$}{} unknown}\label{s:BinomialP} When working with discrete distributions, Equation \eqref{eq:Jacobian} can no longer be used, and one must instead invert the DGA directly.  To demonstrate how one applies the generalized fiducial method to discrete data, we begin with a well studied application of the GFI ideas: inference on $p$ from the binomial distribution with $n$ known. 
The following is a possible data generating algorithm
\begin{equation}\label{eq:BinDGA}
 Y=\sum_{i=1}^n I_{(0,p)}(U_i),
\end{equation}
where $p\in [0,1]$ is the unknown parameter and $U_i$ are i.i.d.~Uniform$(0,1)$. 


Assume we have an observed realization of our random variable $Y = y$. The uniform random variables $U_i$ that generated the observation $y$ are unknown and therefore we replace them with newly generated $U_i^*$. When we solve \eqref{eq:BinDGA} for $p$, we obtain the interval $(U_{(y)}^*,U_{(y+1)}^*]$, where $U_{(y)}^*$ is the $y$th order statistic of $U^*_i$. Recall that $U_{(y)}^*$ follows the $Beta(y,n-y+1)$ distribution.

This illustrates an important feature of GFI for discrete data: it is based on a random set rather than a random point.  {\myfont This is in contrast to a GFD derived for continuous data using Theorem \ref{Jacobian}.}  Notions from Dempster-Shafer theory \citep{dempster, edlefsen2009}, such as \textit{belief} and \textit{plausibility}, can be used to interpret these sets. Another approach is to reduce the interval to a single  distribution. Typically this is defined as function of the densities of the upper and lower bound of the interval, the density of $Beta(y,n-y+1)$ and $Beta(y+1,n-y)$. \cite{hannig2009AP} suggest using the arithmetic mean, while \cite{SchwederHjortBook} advocate for the geometric mean.  {\myfont Let $f_{\beta(a,b)}$ be the PDF of a $Beta(a,b)$ distribution.  The arithmetic mean leads to $r_{y,n}(p) \propto \left( p \left(n -2y \right) + y \right) f_{\beta(y, n-y)}(p)$ for $y\neq 0,n$ while the geometric mean leads to $r_{y,n}(p) \propto f_{\beta\left(y + \frac{1}{2}, n - y + \frac{1}{2} \right)}(p)$.  
Thus, simulating values from these GFDs is computationally easy and effective, as they are each derived from Beta density functions.  }

This result immediately generalizes to multiple independent observations from a binomial distribution.  Let ${Y}_1, \dots, {Y}_m$ be i.i.d. $Bin(n,p)$ where $n$ is known but $p$ is unknown.  The sufficient statistic $\sum_{i=1}^m Y_i$ follows  $Bin(n m, p)$ and we can use \eqref{eq:BinDGA} to obtain a GFD.

\subsection{Binomial Distribution; \texorpdfstring{$n$}{} unknown, \texorpdfstring{$p$}{} known}
\label{s:BinomialN}
With the aim of providing a more nuanced perspective on the generalized fiducial approach for discrete data, we will now consider a more complicated problem: binomial data for $n$ unknown. To the best of our knowledge, the binomial distribution with unknown $n$ has yet to receive a fiducial solution in the literature.

Consider $m$ i.i.d. observations from a $Bin(n,p)$ distribution, $\textbf{Y} = ({Y}_1, \ldots, {Y}_m)$  where $p$ is {\em known} but $n$ is {unknown}, and our objective is inference on $n$.  We begin our solution by choosing a DGA based on the inverse of the distribution function.  In particular,
\begin{equation} \label{eq:BinNDGA}
{Y}_i = F^{-1}_{n,p} ({U}_i),\quad i=1,\ldots,m,
\end{equation}
where ${U}_1,\ldots, U_m$ are i.i.d. $ Unif(0,1)$, and $F^{-1}_{n,p}(u)=\inf\{y: F_{n,p}(y)\geq u\}$ with $F_{n,p}$ being the distribution function of $Bin(n,p)$.

There are two major observations to make about inverting \eqref{eq:BinNDGA}: first, that $n \geq \max\{\mathbf{Y}\}$; and second, that $n$ does not have an upper bound.  Because of this latter observation, we must begin by approximating the GFD by restricting our calculations to a range of reasonable values {\myfont $\{\max\{\mathbf Y\}, \dots, N-1, N\}$ for some upper bound $N \in \nn$.} Then, the generalized fiducial probability mass of all relevant subsets in this range is determined by a deterministic algorithm.  That is, unlike the usual simulation based algorithms this algorithm directly assigns meaningful probabilities to sets of possible $n$ values. The precise details of this algorithm, {\myfont including our proposal for choosing the upper bound $N$,} can be found in Appendix \ref{a:BinN}.

We remark that the difference between GFD and Bayesian posterior using the flat prior is a result of the fiducial probability assigned to non-singleton sets, and if that probability is negligible the GFD would coincide with the Bayesian posterior.  In the simulations below, the final GFD on the values of $n$ is obtained by sampling from end-points of all of these sets that have positive fiducial probability. 


We illustrate the generalized fiducial solution to this inference problem via a simulation study, comparing the generalized fiducial solution against the Bayesian.  For this simulation, we generate data with 10, 50 and 100 observations from the distribution $Bin(n_0,p)$ where $n_0 = 10$ always and $p$ varies over the set $\{ 0.01, 0.05, 0.1, 0.2, 0.4, 0.5, 0.6, 0.7, 0.8, 0.9, 0.95, 0.99 \}$.  For each instance of our binomial data we draw 1000 values $\hat{n}_{B}$ from the Bayesian posterior distribution on $n$ as well as 1000 values $\hat{n}_{F}$ from the GFD on $n$.  Figure \ref{fig:errorBars} shows the $95\%$ GFD fiducial confidence intervals and Bayesian credible intervals based off of these 1000 draws.  This simulation reveals that while the generalized fiducial solution does capture the true value $n_0$ with an expected amount of regularity, there is not much distinguishing it from the Bayesian solution to the same problem.

\begin{figure}[ht]
  \checkoddpage
  \edef\side{\ifoddpage l\else r\fi}%
  \centering
  \makebox[\textwidth][c]{%
  \begin{minipage}{.59\textwidth}
  \centering
  \includegraphics[width=\linewidth]{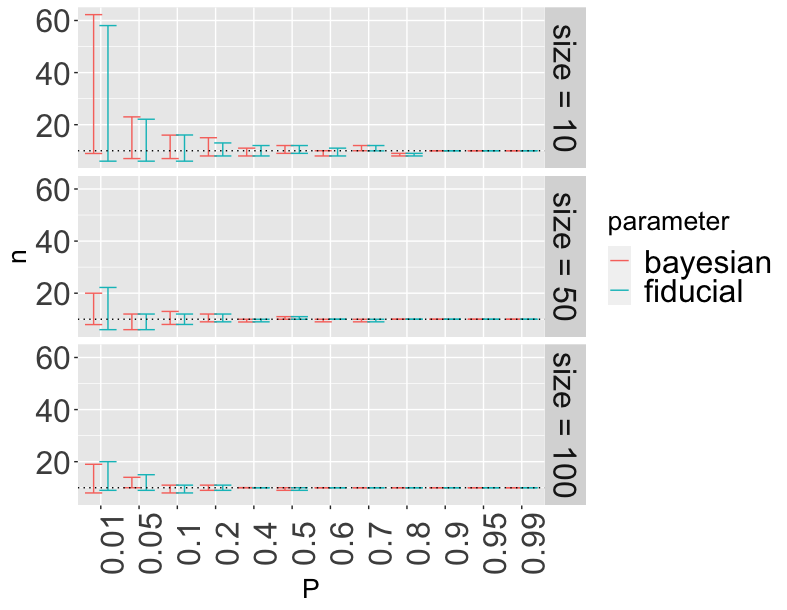}
  \vspace*{-0.5cm}
  \captionof{figure}{95\% fiducial confidence and credible intervals for each candidate value of $p$, over different data sizes.  {\myfont This simulation reveals that while the generalized fiducial solution does capture the true value $n_0$ with an expected amount of regularity, there is not much distinguishing it from the Bayesian solution to the same problem.}}
  \label{fig:errorBars} 
\end{minipage}%
    \hspace*{1cm}
  \begin{minipage}{.59\textwidth}
  \centering
  \includegraphics[width=\linewidth]{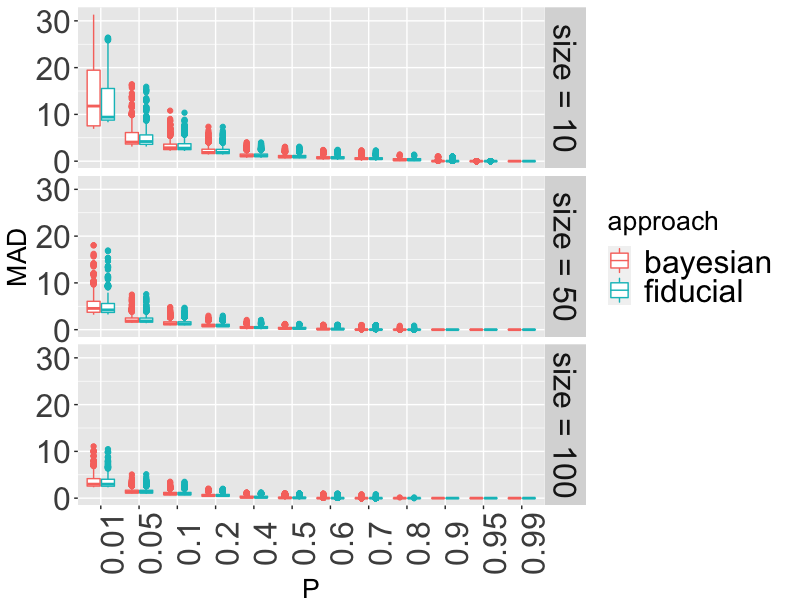}
  \vspace*{-0.5cm}
  \captionof{figure}{Boxplots of 1000 MAD calculations over varying values of $p$ and data sizes.  {\myfont There is some slight variation between the fiducial and Bayesian solutions for the smaller data sizes with smaller values of $p$.}}
  \label{fig:meanDiff} 
\end{minipage} %
  }%
\end{figure}

To better examine the difference between the generalized fiducial and the Bayesian solutions to this problem, we perform the above simulation 1000 times: each time drawing 1000 values from the GFD and 1000 values from the Bayesian posterior distribution.  At each repetition, we record the Mean Absolute Difference (MAD) $\frac{1}{1000} \sum_{i = 1}^{1000} | n_0 - \hat{n}_{\eta, i} |$ for both the generalized fiducial and the Bayesian draws, where $\eta \in \{\text{Fiducial}, \text{ Bayesian}\}$. 

Figure \ref{fig:meanDiff} compares boxplots of each method's 1000 MAD values collected via the described simulation.  {\myfont We can see from this simulation that the generalized fiducial and Bayesian solutions are mostly identical, with some slight variation when the data sizes and values of $p$ are small.  We suspect that this slight variation is due to the fact that, in our calculations for these cases, the fiducial method assigned more mass to \textit{sets} of values.  This aligns well with the fiducial method's major contrast to the Bayesian method: the former can assign mass to sets of $n$ values while the later only assigns mass to singleton values.  }

\begin{figure}
\centering
\begin{minipage}{.6\textwidth}
  \centering
  \includegraphics[width=\linewidth]{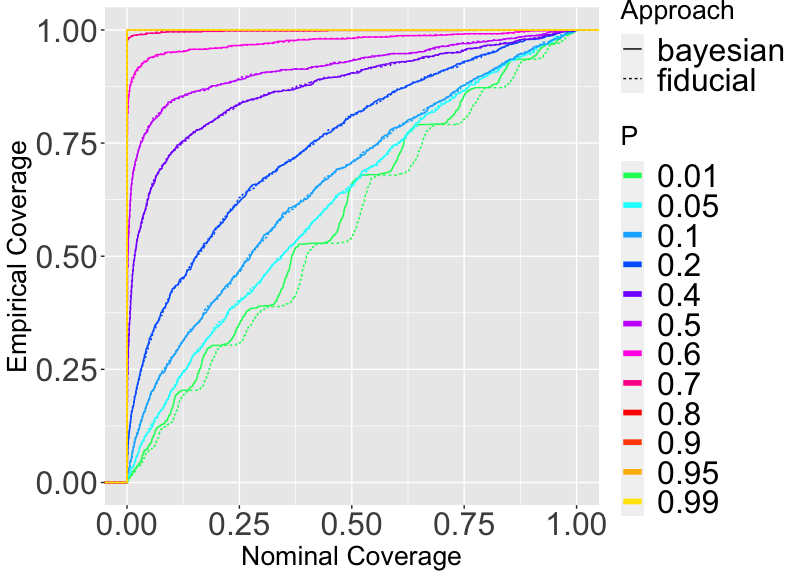}
  \captionof{figure}{The coverage of fiducial confidence and credible intervals at varying levels of confidence for the generalized fiducial and Bayesian approaches. This figure was calculated in the same way as \textbf{Fig. 1}.  {\myfont The Bayesian solution matches the generalized fiducial, except for some slight variation when $p$ is $0.01$ and the data size is small.}}
  \label{fig:UpperP} 
\end{minipage}
\centering
{\small \\} 
\end{figure}

Figure \ref{fig:UpperP} examines the approximate fiducial confidence intervals, built with the upper 95\% of the GFD mass, and the Bayesian upper 95\% credible intervals, of 1000 simulations of a size 100 draw from a $Bin(10,\cdot)$ distribution.  The horizontal axis of Figure \ref{fig:UpperP} is the coverage probability we set for a given interval based off of the posterior distribution we derive, while the vertical axis is the observed proportion of time that a given interval captured the true parameter.

Figure \ref{fig:UpperP} reaffirms what we saw in Figures \ref{fig:errorBars} \& \ref{fig:meanDiff}.  {\myfont Namely, that the Bayesian solution matches the generalized fiducial, except for some slight variation when $p$ is $0.01$ and the data size is small}.  We also observe that for larger values of $p$ an interval at any coverage level will contain the true parameter value.  In the case of $p = 0.99$, an interval at any non-zero confidence level captures the true parameter with complete certainty.  
This relationship between nominal and empirical coverage for larger values of $p$ follows from what we observe in Figure \ref{fig:errorBars}, which suggests that our posterior distribution is just a single point (the true parameter value $n_0$), and thus intervals of at any confidence level are simply the set $\{ n_0 \}$.  
For smaller values of $p$ the coverage plots are closer to the 45 degree line indicating a close to correct coverage. The waves in the plot are caused by discretization effects.

Our results show that inference on $n$ for known $p$ is fairly straightforward when the probability of success is high.  For smaller values of $p$, while both the generalized fiducial and Bayesian approaches have less accuracy, they remain comparable.  

\subsection{Binomial Distribution; \texorpdfstring{$n$}{} unknown, \texorpdfstring{$p$}{} unknown}
\label{s:Binom ialNP}
We will now consider our final non-trivial example through the fiducial lens: simultaneous inference on $(n,p)$ for the binomial distribution.  As far as we are aware, this problem has yet to receive a satisfactory solution using any philosophy whatsoever.  

We consider a set of $m$ independent binomial data: $\textbf{Y} = (Y_1, \dots, Y_m)$ i.i.d.~$Bin(n,p)$, where both $p$ and $n$ are unknown and our objective is inference on the \textit{pair} $(n,p)$.  We will use the same DGA \eqref{eq:BinNDGA} as in the previous example. Recall the following identity relating the binomial CDF $F$ and the beta CDF $G$ such that $F_{n, p}(y) = G_{n - y + 1, y}(1-p)$ \citep{wadsworth}.  This allows us to write an equivalent form of \eqref{eq:BinNDGA}:
\begin{equation}\label{eq:DGEbeta}
G_{Y_{i} + 1, n - Y_{i}}^{-1}(1 - {U}_{i}) \geq p > G_{Y_{i} , n - Y_{i} + 1}^{-1}(1 - {U}_{i}),\quad i=1,\ldots,m,
\end{equation}
which we can invert to get
\[ \hat{p}_{i,n}^{upper} = G_{Y_{i} + 1, n - Y_{i}}^{-1}(1 - U_{i}),\quad \hat{p}_{i,n}^{lower} = G_{Y_{i} , n - Y_{i} + 1}^{-1}(1 - U_{i}), \quad i = 1, \ldots, m.  \]
For any value $\hat{n}$, the set of associated values $p$ that satisfy \eqref{eq:DGEbeta}
is the interval 
\[(\max\{\hat{p}_{1,\hat{n}}^{lower},\ldots,\hat{p}_{m,\hat{n}}^{lower}\},\min\{\hat{p}_{1,\hat{n}}^{upper},\ldots,\hat{p}_{m,\hat{n}}^{upper}\}].
\]
The interval is taken as an empty set if the lower limit is larger than the upper limit.  
\begin{figure}
  \begin{minipage}{\textwidth}
  \centering
  \includegraphics[width=0.45\linewidth]{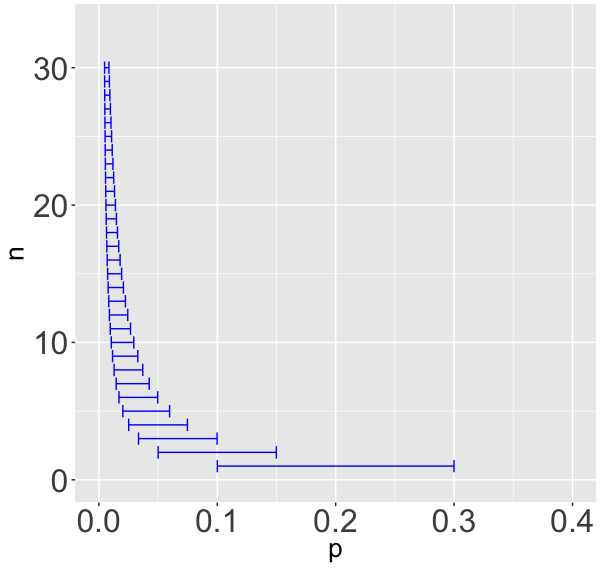}
\hspace{1ex}
  \includegraphics[width=0.45\linewidth]{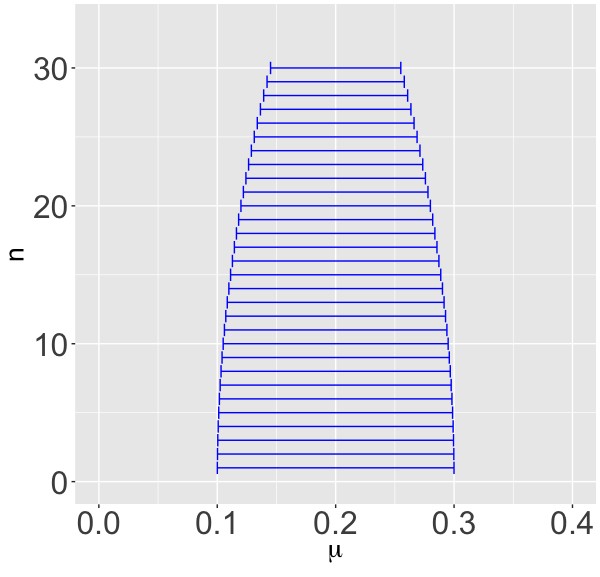}
  \captionof{figure}{Two parameterizations for the same single draw from the GFD: using $n,p$ parameterization (on the left) versus using $n,\mu$ parameterization (on the right). Note that a draw from this GFD defines a region of values in a joint parameter space that is discrete in $n$ and continuous in $p$ and $\mu$.}
  \label{fig:GFDNPparameterization} 
\end{minipage} 
\end{figure}

A generalized fiducial solution to this problem would define a GFD that assigns probability mass to \textit{sets} 
\[\bigcup_{\hat{n}}\{\hat{n}\} \times (\max\{\hat{p}_{1,\hat{n}}^{lower},\ldots,\hat{p}_{m,\hat{n}}^{lower}\},\min\{\hat{p}_{1,\hat{n}}^{upper},\ldots,\hat{p}_{m,\hat{n}}^{upper}\}]. \] These sets contain multiple predicted values $\hat{n}$ that are each associated with a set of values of $p$.  The left panel of Figure \ref{fig:GFDNPparameterization} shows a single random draw from the GFD on the parameter space $(n,p)$.  Note that a single draw in the two-dimensional parameter space is a collection of horizontal segments at each $\hat{n}$.  Naturally, given a fixed data set, $Y$, larger and larger values of $\hat{n}$ correspond with smaller values for $\hat{p}$, making them difficult to visualize.  For this reason, and because of some desirable properties of the limiting distribution of $\hat{n}\hat{p}$ discussed in Appendix \ref{a:BinNP}, we propose to work with a reparameterization $(n,\mu)$, where $\mu := np$.  This results in a GFD that assigns mass to sets of the form \[\bigcup_{\hat{n}} \{\hat{n}\} \times (\max\{\hat{\mu}_{1,\hat{n}}^{lower},\ldots,\hat{\mu}_{m,\hat{n}}^{lower}\},\min\{\hat{\mu}_{1,\hat{n}}^{upper},\ldots,\hat{\mu}_{m,\hat{n}}^{upper}\}],\] where
\[  \hat{\mu}_{i,\hat{n}}^{upper} = \hat{p}_{i,\hat{n}}^{upper} \hat{n},\quad \hat{\mu}_{i,\hat{n}}^{lower} = \hat{p}_{i,\hat{n}}^{lower} \hat{n}, \quad i = 1, \ldots, m. \]
A draw from this reparameterized GFD can be seen in the right panel of Figure \ref{fig:GFDNPparameterization}.


Our algorithm for simulating values from the GFD of $(n,\mu)$ uses a Metropolis within Gibbs sampler.  We start with a set of values $\mathbf{U}$ that give a solution that is not the empty set, we record this solution set, then we resample the $\mathbf{U}$ values one by one such a way that the DGA \eqref{eq:BinNDGA} is still satisfied. These updates are available in closed form.  To improve mixing we also add to each Gibbs scan two additional random walk Metropolis Hastings steps, one in the $n$ direction and one in the $\mu$ direction, so that we can adequately explore our sample space.  Further details on the MCMC algorithm can be found in Appendix \ref{a:BinNP}.


\begin{figure}[ht]
  \checkoddpage
  \edef\side{\ifoddpage l\else r\fi}%
  \centering
  \makebox[\textwidth][c]{%
    \begin{minipage}[t]{1.35\textwidth}
      \centering
    \includegraphics[width=0.08\linewidth]{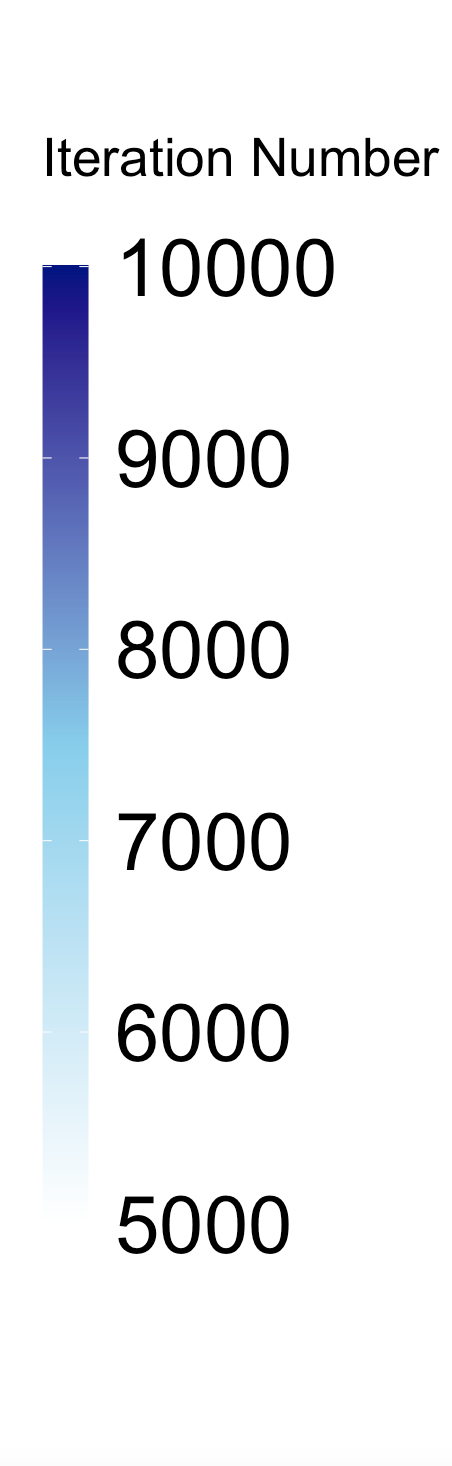} 
    \includegraphics[width=0.45\linewidth]{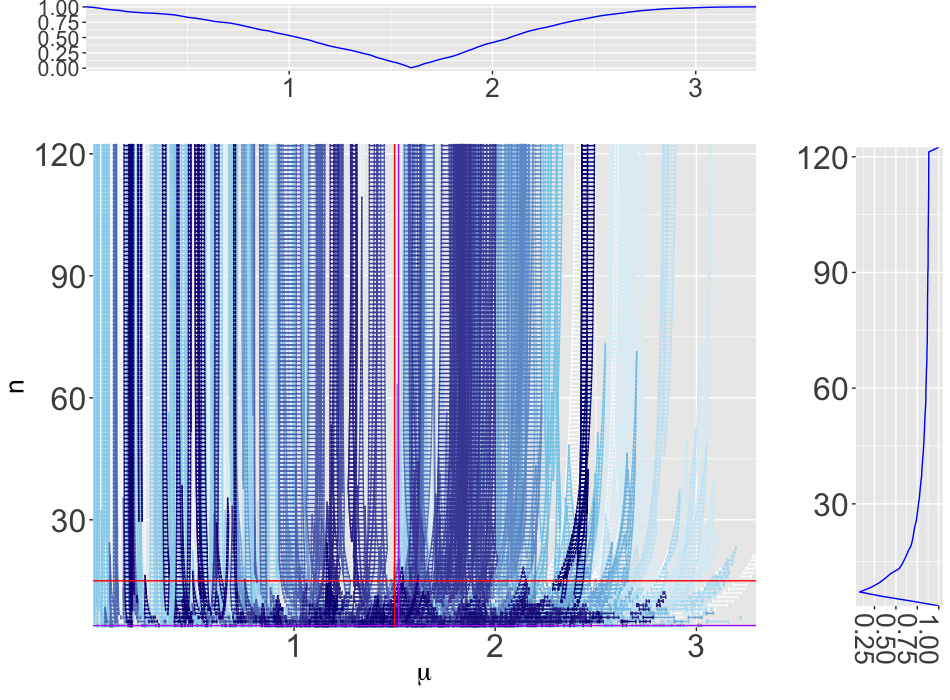} 
    \includegraphics[width=0.45\linewidth]{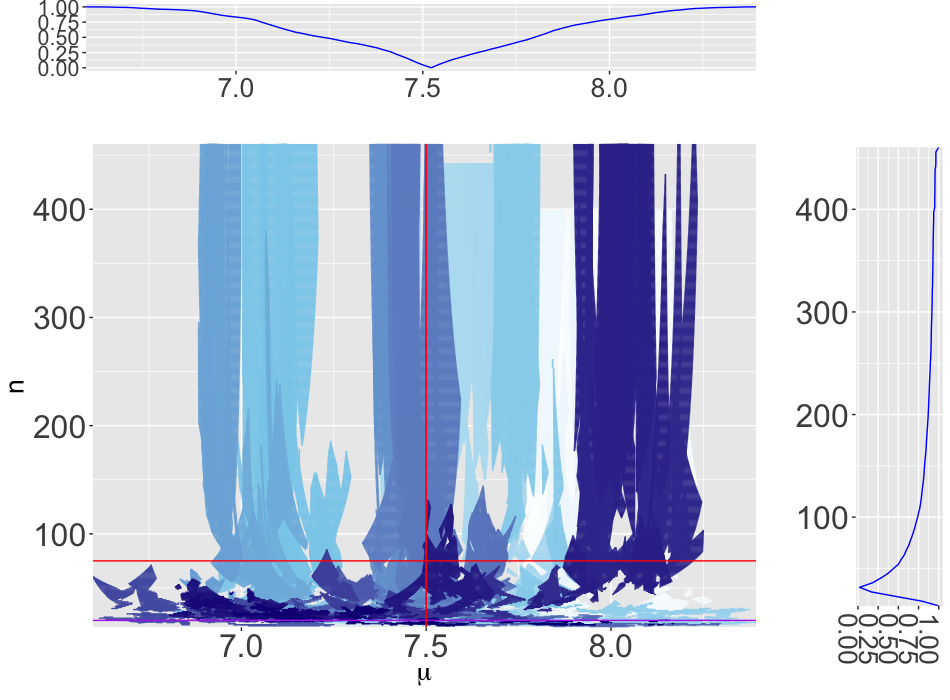}
    \caption{GFD for $Bin(15,0.1)$ data (left) and $Bin(75,0.1)$ data (right).  Different shades of color in the center plot correspond to separate draws using a Gibbs sampler on the GFD built on the data.  The top and right marginal plots show the confidence curves of $\mu$ and $n$ values, respectively, centered at the median.  The red lines cross at the true $(n,\mu)$ value used to generate the distribution and the purple lines cross at a value of \cite{DasGupta} estimator  $(\hat{n},\bar{\mathbf{Y}})$ that are used to initialize our algorithm.} 
    \label{fig:nprun1}
    \end{minipage}%
    }%
\end{figure}
\begin{figure}[ht]
  \checkoddpage
  \edef\side{\ifoddpage l\else r\fi}%
  \centering
  \makebox[\textwidth][c]{%
    \newline
    \begin{minipage}[t]{1.35\textwidth}
      \centering
    \includegraphics[width=0.08\linewidth]{color_scale.png} 
    \includegraphics[width=0.45\linewidth]{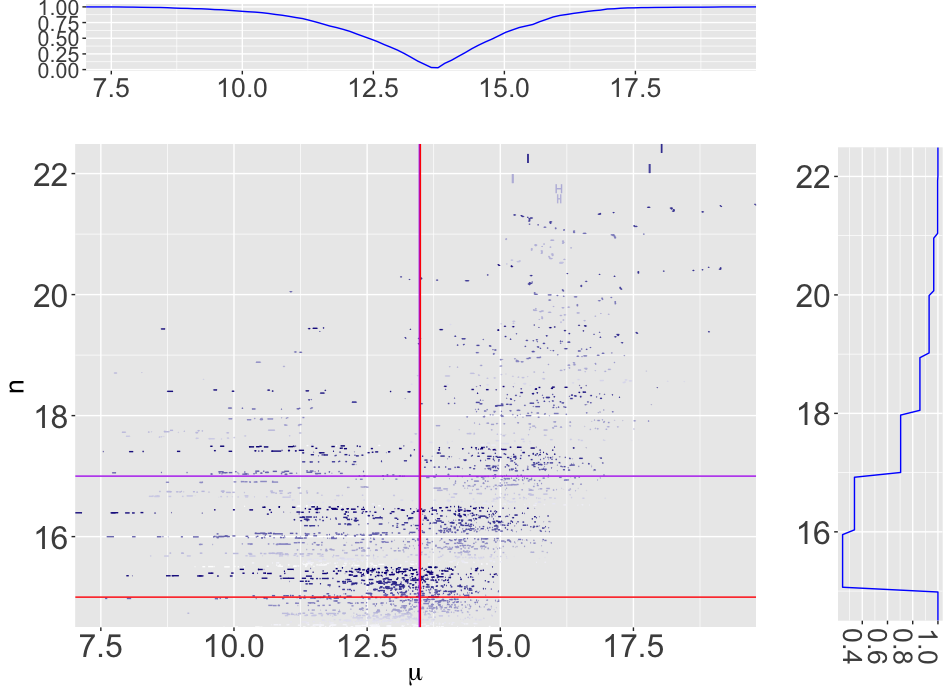} 
    \includegraphics[width=0.45\linewidth]{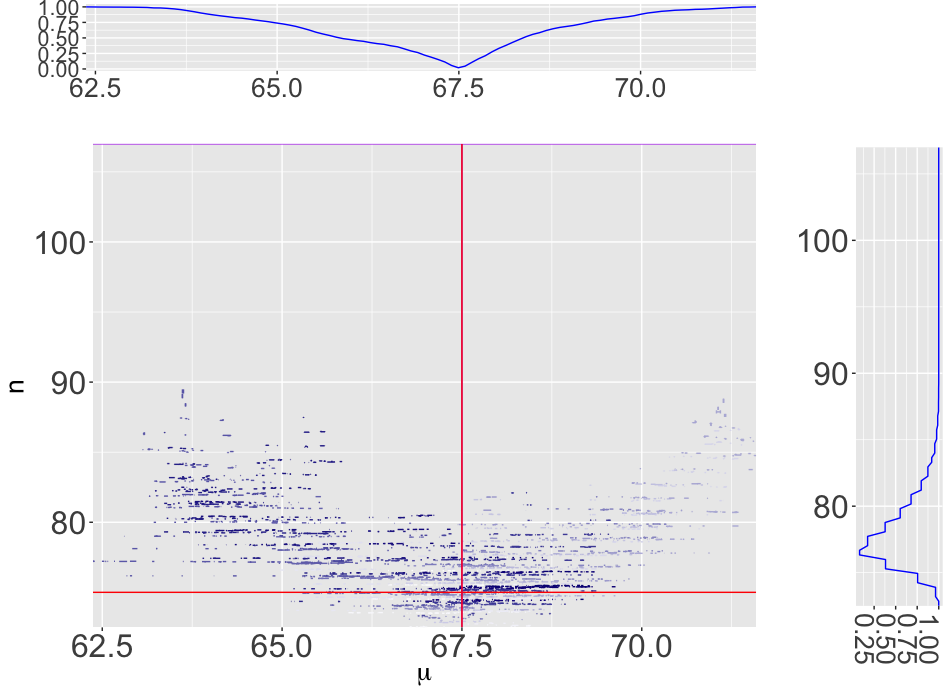}
    \caption{GFD for $Bin(15,0.9)$ data (left) and $Bin(75,0.9)$ (right) visualized in the same way as Figure \ref{fig:nprun1}.  {\myfont For the $Bin(15,0.9)$ data, the values are jittered in the $n$ direction for clarity.}} 
    \label{fig:nprun2}
    \end{minipage}%
  }%
\end{figure}

\begin{figure}[ht]
  \checkoddpage
  \edef\side{\ifoddpage l\else r\fi}%
  \centering
  \makebox[\textwidth][c]{%
    \begin{minipage}[t]{1.35\textwidth}
      \centering
    \includegraphics[width=0.45\linewidth]{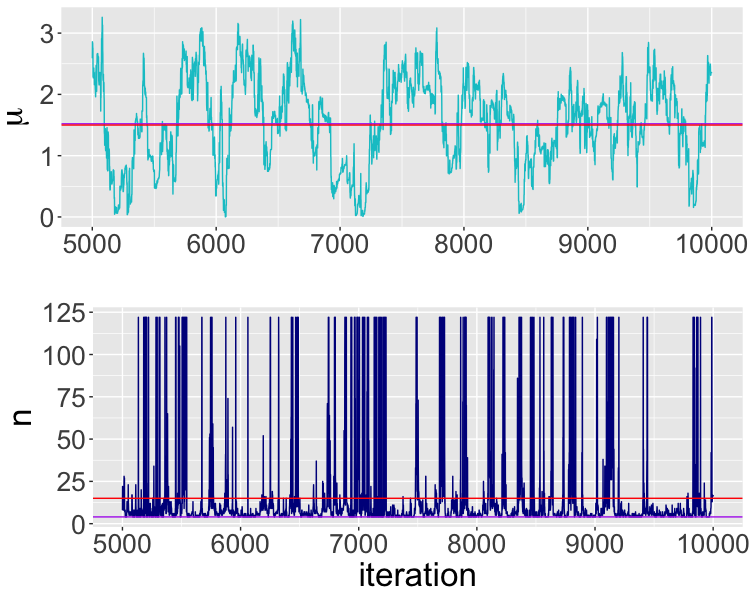} \includegraphics[width=0.45\linewidth]{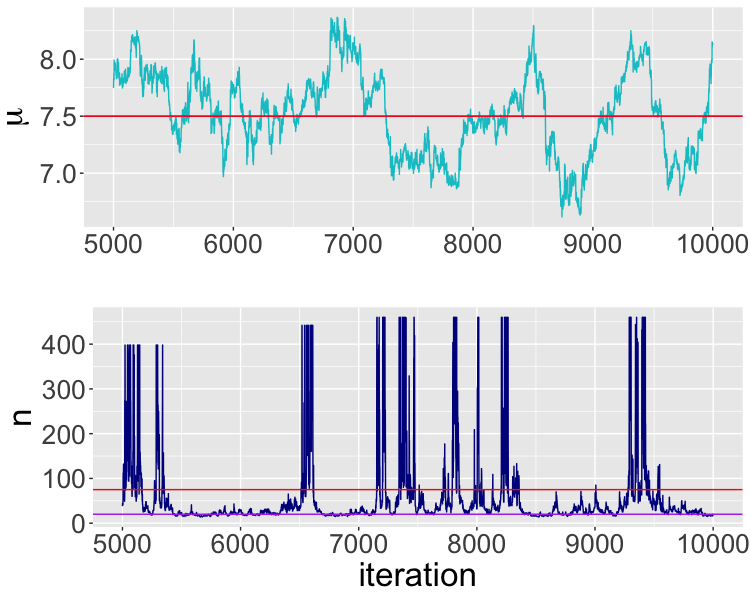}
    \caption{Marginal trace plots for $\mu$ and $n$ for $Bin(15,0.1)$ data (left) and $Bin(75,0.1)$ data (right).  These plots use the same GFD produced for figures \ref{fig:nprun1} and \ref{fig:nprun2}.  The red line crosses at the true parameter values used to generate the data and the purple line crosses at a value of \cite{DasGupta} estimator  $(\hat{n},\bar{\mathbf{Y}})$ that are used to initialize our algorithm.} 
    \label{fig:nprun1_trace}
    \end{minipage}%
    }%
\end{figure}

\begin{figure}[ht]
  \checkoddpage
  \edef\side{\ifoddpage l\else r\fi}%
  \centering
  \makebox[\textwidth][c]{%
    \newline
    \begin{minipage}[t]{1.35\textwidth}
      \centering
    \includegraphics[width=0.45\linewidth]{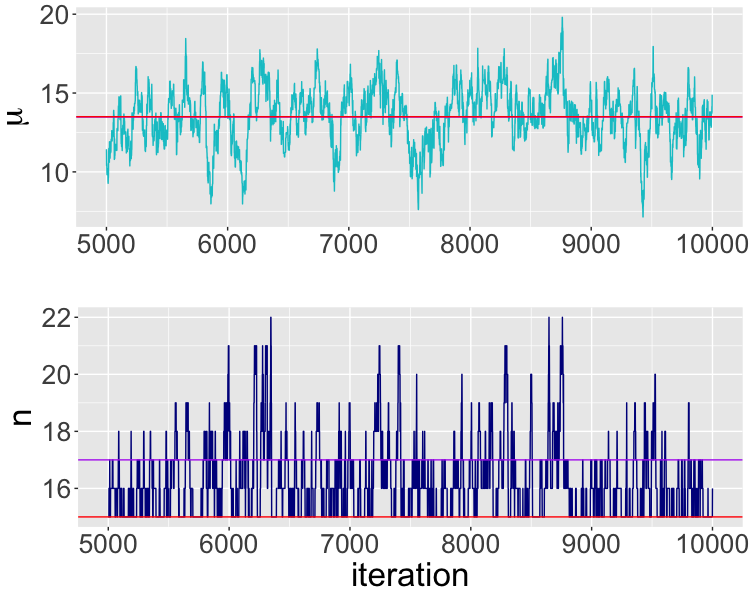} 
    \includegraphics[width=0.45\linewidth]{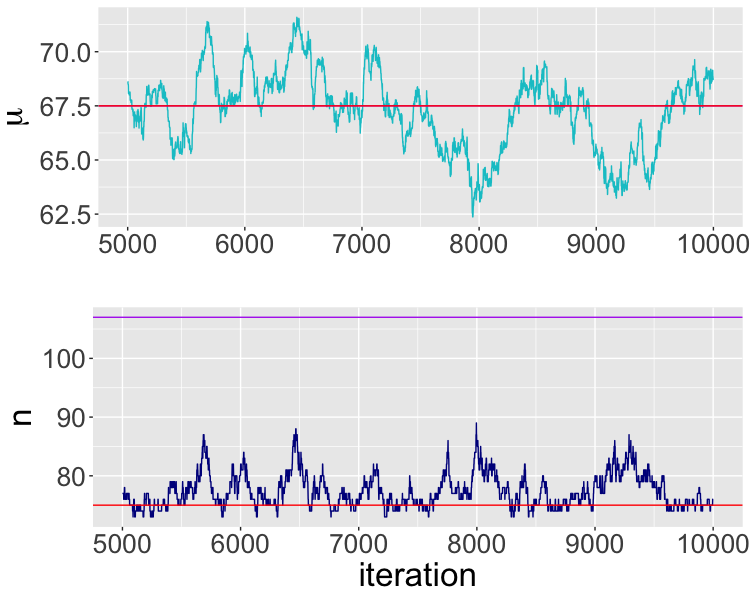}
    \caption{Marginal trace plots for $\mu$ and $n$ for $Bin(15,0.9)$ data (left) and $Bin(75,0.9)$ (right) visualized in the same way as Figure \ref{fig:nprun1_trace}.} 
    \label{fig:nprun2_trace}
    \end{minipage}%
  }%
\end{figure}

Figures \ref{fig:nprun1} \& \ref{fig:nprun2} show four implementations of our Gibbs sampler for different true values of $n$ and $p$.  For these figures we draw 50 observations from the distributions $Bin(0.1,15)$, $Bin(0.1,75)$, $Bin(0.9,15)$, and $Bin(0.9,75)$, respectively.   


For $p$ small, the binomial distribution approaches a Poisson distribution and results in large sets of $\hat{n}$ values for each draw from our GFD.  One can observe this in Figure \ref{fig:nprun1}, where certain draws from the GFD seem to go upward to infinity in the $n$ direction.  Our algorithm senses when a draw from the GFD is reaching its limiting Poisson distribution and stops checking new values for $n$.  The specific details of this stopping criterion are discussed in Appendix \ref{a:BinNP}. In contrast, for large $p$, the algorithm reports a narrower range in $n$ (Figure \ref{fig:nprun2}).  Note that for clarity we jittered the graph in the $n$ direction, especially in the case of Figure \ref{fig:nprun2}, where most of the sets have a $n$ value concentrated in one area. 

Our Gibbs sampler is able to explore both the $n$ space and the $\mu$ space simultaneously.  We examine the marginal movement in the $\mu$ and the $n$ directions in Figures \ref{fig:nprun1_trace} \& \ref{fig:nprun2_trace}.  To create these figures, we select a single representative $\mu$ and $n$ from each draw from the GFD.  To select a single $\mu$ from a set of values in a single draw from the GFD, we randomly choose either the minimum or maximum $\mu$ value.  We perform the same process in the $n$ direction; for every draw from the GFD, we randomly select either the largest or smallest $n$ value present to be that draw's representative in the trace plot.  For each of these selections, the minimum or maximum are chosen with equal probability.  The resulting figures show how our algorithm explores the $n$ and $\mu$ space around the true parameter values.  For smaller true values of $p$, these plots show how our algorithm visits very large values of $n$.  These draws correspond to the instances where the algorithm begins to capture the limiting Poisson distribution.



We evaluate this method via a simulation study.  For each combination $n \in \{15, 75\}$ and $p \in \{0.1, 0.5, 0.9\}$, we drew $100$ observations from a Bin$(n, p)$ distribution 300 times.  One important thing to note about this solution is that each element of the GFD is a \textit{set} of values, rather than a singular pair $(\hat{n}, \hat{\mu})$.  This complicates the notion of containment ratios slightly.  Ideally, for each iteration of our simulation study, we would like to be able to calculate a box of reasonable values that we would expect to contain the true parameter pair $(n,\mu)$ 95\% of the time.  {\myfont We followed the Dempster-Shafer theory \citep{dempster} to draw belief and plausibility boxes, which we then used to calculate coverage.  In this application, we call belief box a rectangle centered at the median of the each marginal fiducial distribution that fully contained 95\% of our fiducial draws, while the plausibility box is a rectangle intersecting 95\% of our fiducial draws.  One can see the resulting containment ratios using belief and plausibility in Table \ref{tab:beliefplaus}.  We point out that as expected belief box is larger than plausibility box.  This is a reflection of the fact that in Dempster-Shafer theory belief of a set is smaller than plausibility of the same set.} 

\begin{table}
\centering
\begin{minipage}{.9\textwidth}
  \centering
  \begin{center}
 \begin{tabular}{||c | c | c | c | c ||} 
 \hline
 Distribution & \makecell{95\% plausibility \\ Containment} & \makecell{95\% Belief \\Containment}  & \makecell{$\mu$ Marginal \\ Coverage}  & \makecell{$n$ Marginal \\ Coverage}\\
 \hline
 $Bin(0.1,15)$ & 1 & 1 & 0.992 & 1 \\
 $Bin(0.1,75)$ & 1 & 1 & 0.943 & 1 \\
 $Bin(0.5,15)$ & 0.994 & 0.994 & 0.961 & 0.968 \\
 $Bin(0.5,75)$ & 0.955 & 0.955 & 0.961 & 0.968 \\
 $Bin(0.9,15)$ & 1 & 1 & 1 & 1\\
 $Bin(0.9,75)$ & 0.978 & 0.986 & 1 & 1 \\
 \hline
\end{tabular}
\end{center}
  \captionof{table}{The coverage of 95\% belief and plausibility boxes on posterior fiducial sets of $(n,\mu)$.}
  \label{tab:beliefplaus} 
\end{minipage}
\centering
{\small \\} 
\end{table}

{\myfont Table \ref{tab:beliefplaus} shows that our method performs mostly conservatively for our choices of parameter values.  This is in line with our expectations based on what we observed in Figures \ref{fig:nprun1} \& \ref{fig:nprun2}. The belief boxes for these draws from the fiducial posterior would need to be very large, and potentially unbounded, since they need to fully contain 95\% of the fiducial samples.}

\section{Conclusion and Discussion}

GFI provides an alternative perspective on numerous classical inferential problems.  Our selection of examples show how the generalized fiducial framework can derive a meaningful, practically feasible distribution on a target parameter without a need for an arbitrarily defined prior.  While there is much to be learned from using this framework on classic theoretical problems, modern research has shown that the fiducial perspective is a strong tool for tackling today's statistical challenges.  We provide a short list of such references here for the interested reader: \citep{e2008, cisewski2012,wandler2011,wandler2012a,wandler2012b,glagovskiy2006,hannig2009,liu2016}.

{\myfont While the fiducial method has seen much recent success, it has the potential drawback that inverting the DGA for some data applications can be non-trivial.  The result by \cite{GenFid}, Theorem \ref{Jacobian}, mitigates this issue by developing a general formula the GFD for continuous data.  Of course, Theorem \ref{Jacobian} replaces the need for a potentially complicated inversion by a need for a potentially complicated differentiation.  As a future step towards making fiducial inference accessible to a wide audience, the authors would like to see an autodifferentiator developed to automate the process of inverting a DGA.  With such a tool, the fiducial process could be made significantly more accessible.}

\section{Acknowledgements}
Jan Hannig's research was supported in part by the National Science Foundation under Grant No. IIS-1633074 and DMS-1916115.

\begin{appendix}



\section{Derivation of the Multivariate Jacobian Quantity \texorpdfstring{$J(X ; \mu, U, \Lambda)$}{} and Marginal \texorpdfstring{$r(\text{veck}(A))$}{}}\label{a:MVN}
Let $J^{i,j}$ be a matrix of all zeros save for a value of 1 at the index $(i,j)$.  In general, the form of our Jacobian derivative with respect to any value of the skew-symmetric matrix $A$ is $J(\mathbf{y},\mu, A,\Lambda) = \left[ \nabla_\mu \mathbf Y_i ; \frac{\partial\mathbf Y_i}{\partial \lambda_j}; \frac{\partial \mathbf Y_i}{\partial a_{j,k}} \right]$, where ``;" denotes the row concatenation of the following three quantities:

\begin{align*}
    \nabla_\mu \mathbf Y_i  &= I_d \\
    \frac{\partial\mathbf Y_i}{\partial \lambda_j} &= {\lambda_j}^{-1} (I_d -A)(I_d + A)^{-1} J^{j,j}(I_d +A)(I_d - A)^{-1}(\mathbf Y_i-\mu),\quad j=1,\ldots,d \\
    \frac{\partial \mathbf Y_i}{\partial a_{j,k}} &=  2(I_d+A)^{-1}\Big( -J^{i,j} + J^{j,i} \Big)(I_d-A)^{-1}(\mathbf Y_i-\mu),\quad 1\leq j<k\leq d.
\end{align*}

Using the $l_2$ norm, the final form of our Jacobian is $D(J(\mathbf{y},\mu, A,\Lambda))$ where $D(X) = \left| \det \left( \sqrt{X^TX}\right)\right|$.  One can show, using the Cauchy-Binnet formula, that the $\mu$ vector drops out and that we can factor out the $\Lambda$ matrix such that $D(J(\mathbf{y},\mu, A,\Lambda))  = \det(\Lambda)^{-1} D^*(J(\mathbf{y}, A))$.  {\myfont The notation $D^*(J(\mathbf{y}, A))$ is chosen to reiterate that, with the $\Lambda$ matrix factored out, the remaining Jacobian term relies on neither $\Lambda$ nor $\mu$.}  We can thus write the GFD likelihood using \eqref{eq:Jacobian} like so:
\begin{align*}
r_{\mathbf y}(\operatorname{veck}(A),\Lambda,\mu) \propto&  (2 \pi)^{\frac{-nd}{2}} |\Sigma|^{-n/2} \exp \left\{ \frac{-1}{2}  \operatorname{tr} \left( \Sigma^{-1} \left( \sum_{i=1}^n (\mathbf{y}_i - \mu)(\mathbf{y}_i - \mu)^T \right) \right) \right\} D(J).
\end{align*}
By integrating out $\mu$ and rearranging the terms that depend on $\lambda_i$ we get 
\[
r_{\mathbf y}(\operatorname{veck}(A),\Lambda) \propto D^*(J(\mathbf{y}, A)) (2 \pi )^{\frac{-d(n-1)}{2}} n^{-d/2} \left[ \prod_{i=1}^d \left(  \frac{1}{\lambda_i^2} \right)^{\frac{n}{2}} \exp \left\{  \frac{-1}{2\lambda_i^2} ( Z^T  nS^2 ~Z)_{ii} \right\}\right],
\]
Next, we integrate out $\Lambda$ to get the marginal GFD of $A$  
\[
r_{\mathbf y}(\operatorname{veck}(A)) 
\propto D^*(J(\mathbf{y}, A)) ( \pi )^{\frac{-d(n-1)}{2}} 2^{-d } n^{\frac{-d}{2}}  \Gamma\left(\frac{n-1}{2} \right)^d\prod_{i=1}^d( Z^T   n S^2 ~ Z)_{ii}^{\frac{-(n-1)}{2}}.
\]

A similar calculation can be done for an alternative data generating algorithm using minimal sufficient statistic $(\mathbf{\bar Y},S^2)$. The only difference will be the form of $D^*(J(\mathbf{y}, A))$. {\myfont As a reminder, the $\operatorname{veck}(A)$ operation vectorizes the strictly lower triangular elements of the skew-symmetric matrix $A$, as discussed in \cite{henderson1979}.}  See the STAN implementation on this paper's Github page for details.

\section{Derivation of the Mixed Models Jacobian Quantity \texorpdfstring{$J(\mathbf{Y}, S_\alpha, \xx\beta, \sigma_\alpha, \sigma_e)$}{}}\label{a:MM}
The derivative $\frac{\delta \textbf{Y}}{\delta \beta} = \xx$ is trivial.  To solve for the other two terms, note that $\Sigma = AA$ and therefore
\[
\frac{\delta \Sigma}{\delta\sigma_\alpha^2} = A \frac{\delta A}{\delta \sigma_\alpha^2} + \frac{\delta A}{\delta \sigma_\alpha^2}A, \mbox{ where }
\frac{\delta A}{\delta \sigma_\alpha^2} = \frac{1}{2} S_\alpha A^{-1}, 
\]
The derivative $\frac{\delta A}{\delta \sigma_\alpha^2}$ commutes with $A$ since
\[
    A = U\left(\text{diag}\left(\sqrt{\sigma_\alpha^2\lambda_i + \sigma_e^2}\right)\right)U',\quad
    \frac{\delta A}{\delta \sigma_\alpha^2} =\frac{1}{2} U\left(\text{diag}\left(\lambda_i \left( \sigma_\alpha^2\lambda_i + \sigma_e^2\right)^{-1/2}\right)\right)U',
\]
and clearly $\frac{\delta A}{\delta_\alpha^2} A = A\frac{\delta A}{\delta_\alpha^2}$.  Therefore,
\[
    \frac{\delta \textbf{Y}}{\delta \sigma_\alpha^2} = \frac{\delta }{\delta \sigma_\alpha} \left( \xx \beta + A \zz \right)\Big|_{\zz = A^{-1}*(Y - \xx\beta)} 
    =  \frac{1}{2} S_\alpha \Sigma^{-1} (\textbf{Y} - \xx \beta).
\]
{\myfont (Note that $A^{-1}(Y - \xx\beta)$ is a matrix product, not a realization of a DGA.)} Following this same logic gives \[\frac{\delta \textbf{Y}}{\delta \sigma_e} = \Sigma^{-1} (\textbf{Y} - \xx \beta) = (\sigma_\alpha^2 S_\alpha+ \sigma_e^2 I)^{-1} (\textbf{Y} - \xx \beta).\]

\section{Binomial algorithm for unknown \texorpdfstring{$n$}{}, known \texorpdfstring{$p$}{}}\label{a:BinN}
As in Section~\ref{s:BinomialP}, the generalized fiducial solution to the $n$ unknown problem assigns mass to \textit{sets} of integers rather than individual values.  Our aim is to assign fiducial probabilities to all reasonable sets of $n$ values in the sample space. 
Considering a \textit{set} $\mathbf{s}$ of candidate $n$ values and fixing the observed data $\mathbf{Y}$, we start by calculating the probability that we observe a set of uniforms values such that any $n^* \in \mathbf{s}$ could be the true $n$ value.  {\myfont Notice that the set of viable uniform values for any superset of $\mathbf{s}$ must necessarily be a subset of the set of viable uniform values for $\mathbf{s}$.}  The astute reader will notice that this is the same as the notion of {\it commonality}  found in Dempster-Schaffer calculus \citep{Shafer1976}.  We start by calculating the following inversion of \eqref{eq:BinNDGA}, 
\begin{equation}\label{eq:BinNDGAInversion}
F_{n^*,p}(Y_i - 1) < U_i^* \leq F_{n^*, p}(Y_i),\quad \text{ for }i=1,\ldots, n.
\end{equation}
Since $F_{n^*,p}(Y_i)$ is non-decreasing in $n^*$, we see that for any fixed $(U_1,\ldots,U_n^*)$ the set of $n^*$ is an interval of consecutive integers. We will denote the set of all integer intervals $\mathbf S=\{\{i,i+1, \ldots,j\}, 1\leq i\leq j\}$.  The commonality of the set $\mathbf{s}\in\mathbf S$ will be \[p(\mathbf{s} | \textbf{Y}) =\prod_{i=1}^m \left[F_{\max\{\textbf{s}\}, p}(y_{i}) - F_{\min\{\textbf{s}\}, p}(y_{i}-1)\right]^+.\]  

Our algorithm begins by using this commonality to address the issue of no strict upper bound on the set of possible values for $n^*$.  First, select a precision cutoff $\epsilon_1$.  Then, we define our set of candidate $n^*$ values, which we will call $\mathbf{N}$, by way of the following algorithm:  Starting with the observed maximum of the data  consider new values sequentially, comparing the commonality of the candidate $n^*$ value: $p(\{n^*\} | \textbf{Y})$ with commonalities observed so far. If the ratio of the commonality to the maximum commonality in $\mathbf{N}$ accepted so far
\[
\frac{p( \{n^*\} | \mathbf{Y})}{max_{k \in \mathbf{N}} \{ p(\{k\} | \textbf{Y})\}} > \epsilon_1,
\]
 add $n^*$ to our set $\mathbf{N}$ and consider $n^*+1$. Otherwise, halt the algorithm. 

Next we approximate the GFD by restricting ourselves only to subsets of $\mathbf{N}$. Define
$\mathbf{\hat S}= \{\mathbf s\in\mathbf S, \mathbf s\subset \mathbf N\}$.  
Using the analogue noted previously between Dempster-Shaffer calculus and the inversion of our DGA, we define the fiducial probability of a set $\mathbf{s}$ as the Dempster-Shaffer \textit{mass} of $\mathbf{s}$ \citep{Shafer1976, Yager2010} modified and renormalized so that $r_{\mathbf Y}(\emptyset)=0$.  
We calculate the the Dempster-Shaffer mass of $\mathbf{s}$ recursively by starting from $m(\mathbf{N})=p(\mathbf{N}|\mathbf Y)$ and then for 
$\mathbf s\in\mathbf{\hat S}$
\[ m( \textbf{s}) = \left[p(\mathbf{s} | \mathbf{Y}) - \left( \sum_{\textbf{r}\in\mathbf{\hat S} : \textbf{s} \subset \textbf{r}} m(r)   \right)\right]^{+}.\]
 Then, after the probability of all sets are defined in this way, we renormalize them so that they add to 1.  The fiducial probability for each element of $\mathbf s\in\mathbf{\hat S}$ is
\[ r_{\textbf{Y}}( \textbf{s}) = \frac{m(\mathbf s)}{1-m(\emptyset)}. \]


\section{Binomial algorithm for unknown \texorpdfstring{$n$}{}, unknown \texorpdfstring{$p$}{}}\label{a:BinNP}

We begin by making an important observation about the distribution of the predicted value $\hat{p}$.  For random $U_i^*$, fixed $\mathbf{Y}$ and $n$, we have that 
\[ \hat{p}_i^{upper} \sim Beta(Y_{i} + 1, n - Y_i),\quad
\hat{p}_i^{lower} \sim Beta(Y_{i} , n - Y_i + 1). \] 
Writing out the distribution on the bounds of our fiducial sets for $p$ in this way allows use of the following well-known result: \begin{lemma}\label{lem:stirling}
Let $\mu:= np$.  Then
\[ \hat{\mu}_i^{upper} = n \hat{p}_i^{upper} \overset{\mathcal{D}}{\rightarrow} Gamma(Y_{i} + 1, 1) ~~~~~~~ \hat{\mu}_i^{lower} =  n \hat{p}_i^{lower} \overset{\mathcal{D}}{\rightarrow} Gamma(Y_{i} , 1), \] 
as $n \rightarrow \infty$. 
\end{lemma}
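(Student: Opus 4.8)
The plan is to establish both limits by a direct density calculation, which is exactly the ``Stirling'' step flagged in the label. The two statements are structurally identical, so I would prove the claim for $\hat{\mu}_i^{upper}$ in full and then simply record the parameter shift needed for $\hat{\mu}_i^{lower}$. Throughout I fix $Y_i$ and regard $n$ as the asymptotic parameter.

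First I would write down the density of $\hat{\mu}_i^{upper} = n\,\hat{p}_i^{upper}$. Starting from $\hat{p}_i^{upper}\sim Beta(Y_i+1,\,n-Y_i)$ and applying the change of variables $\mu = nx$, one gets on $0<\mu<n$
\[
f_n(\mu) = \frac{1}{\Gamma(Y_i+1)}\cdot\frac{\Gamma(n+1)}{\Gamma(n-Y_i)\,n^{\,Y_i+1}}\cdot \mu^{Y_i}\Bigl(1-\frac{\mu}{n}\Bigr)^{n-Y_i-1}.
\]
Next I would send $n\to\infty$ with $\mu$ held fixed. The ratio $\Gamma(n+1)/\Gamma(n-Y_i) = n(n-1)\cdots(n-Y_i)$ is a product of $Y_i+1$ factors each asymptotic to $n$, so after division by $n^{Y_i+1}$ it tends to $1$; this is the only place where Stirling (or, equivalently, the elementary falling-factorial estimate) is used. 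Simultaneously $\bigl(1-\mu/n\bigr)^{n-Y_i-1}\to e^{-\mu}$. Hence $f_n(\mu)\to \mu^{Y_i}e^{-\mu}/\Gamma(Y_i+1)$ pointwise on $(0,\infty)$, which is precisely the $Gamma(Y_i+1,1)$ density. The lower-bound claim is the same computation with $(Y_i+1,\,n-Y_i)$ replaced by $(Y_i,\,n-Y_i+1)$, producing the $Gamma(Y_i,1)$ density.

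The only genuinely delicate point --- and the step I would flag as the main obstacle --- is upgrading this pointwise convergence of densities to convergence in distribution. I would settle it with Scheff\'e's theorem: the $f_n$ and the limit are all probability densities and $f_n\to f$ pointwise, so $f_n\to f$ in $L^1$, and $L^1$ convergence of densities implies weak convergence. As a cross-check (and an alternative that sidesteps Scheff\'e), I note the Gamma representation $Beta(\alpha,\beta)\equald G_1/(G_1+G_2)$ with independent $G_1\sim Gamma(\alpha,1)$ and $G_2\sim Gamma(\beta,1)$: here $n\,\hat{p}_i^{upper} = G_1/\bigl((G_1+G_2)/n\bigr)$ with $G_2\sim Gamma(n-Y_i,1)$, and since $G_2/n\to 1$ in probability by the weak law of large numbers while $G_1/n\to 0$, Slutsky's theorem delivers the $Gamma(Y_i+1,1)$ limit immediately. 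In either route the bookkeeping of \emph{which} Beta parameter is the one tending to infinity is the only thing that requires care; once that is pinned down, both arguments are routine.
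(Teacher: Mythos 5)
Your argument is correct, but it is worth noting that the paper does not actually prove this lemma at all: its entire ``proof'' is the citation ``See, for instance, Gut (2005),'' treating the statement as a textbook fact about the Beta-to-Gamma limit. What you supply is therefore a genuine, self-contained derivation where the paper offers none. Both of your routes check out. The density computation is right: for $\hat{p}_i^{upper}\sim Beta(Y_i+1,\,n-Y_i)$ the normalizing ratio is $\Gamma(n+1)/\Gamma(n-Y_i)=n(n-1)\cdots(n-Y_i)$, a falling factorial with $Y_i+1$ factors, so dividing by $n^{Y_i+1}$ gives $1+o(1)$, and combined with $(1-\mu/n)^{n-Y_i-1}\to e^{-\mu}$ you land on the $Gamma(Y_i+1,1)$ density; invoking Scheff\'e to pass from pointwise convergence of densities to weak convergence is exactly the right move and is the step most people would gloss over. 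The parameter shift for the lower endpoint ($Y_i$ factors instead of $Y_i+1$, yielding $Gamma(Y_i,1)$) is also correct, with the implicit and harmless caveat that $Y_i\geq 1$ is needed for $Beta(Y_i,\cdot)$ to make sense. Your second route via the representation $Beta(\alpha,\beta)\stackrel{\mathrm{d}}{=}G_1/(G_1+G_2)$ plus the law of large numbers and Slutsky is arguably cleaner, since it avoids Scheff\'e entirely and makes transparent why the limit is $Gamma(Y_i+1,1)$: the numerator already has that law and the denominator concentrates at $1$. Either version would serve as a complete replacement for the paper's bare citation.
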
\begin{proof}
See, for instance, \cite{Gut2005}.
\end{proof}  



Our aim is to develop a method that can simulate draws from the GFD by calculating sets with elements of the form $\hat{n} \times (\hat{\mu}^{lower}, \hat{\mu}^{upper}) \in \nn \times \mathcal{B} (\rr_{+})$.    A simplistic approach to generating values from the GFD would be to repeatedly re-sample the set of random variables $\textbf{U}$ on the space $[0,1]^{m}$ and in turn recalculate the paired sets that satisfy \eqref{eq:BinNDGA}. There are two issues that arise for this approach.  First,  there is no upper bound on the set of possible $n$ values, and second, it is possible that a majority of the uniform sets $\mathbf{U}$ drawn completely randomly will yield no solution to \eqref{eq:BinNDGA}. To address these issues we develop a stopping criterion for searching the sample space in the $n$ direction and we develop a more computationally efficient way to randomly select uniform values $\mathbf{U}$.

We use the following protocol for choosing a range of potential $n$ values.  While there are many ways to choose a first candidate $\hat{n}$, we use the estimator suggested by \cite{DasGupta}.  From this first candidate we check each subsequent $n$ by increment of one and calculate $(\hat{\mu}^{lower}, \hat{\mu}^{upper})$.  We develop a stopping rule for $\hat{n}$ based on the asymptotic result in Lemma \ref{lem:stirling}.  Given a precision parameter $\epsilon_2 > 0$, we stop looking for further $\hat{n}$ values whenever
\[ |\hat{\mu}^{upper}_i - H^{-1}(1 - U_{i})_{Y_{i} + 1, 1}| < \epsilon_2 ~~~\text{and}~~~ |\hat{\mu}_i^{lower}  - H^{-1}(1 - U_{i})_{Y_{i}, 1}| < \epsilon_2,  \]
for all $i \in \{1, \dots, m\}$, where $H^{-1}_{\alpha, \beta}$ is the quantile function of a $Gamma(\alpha, \beta)$ random variable.  We stop searching for values of $n$ that work as soon as we are sufficiently close to the limiting Gamma distribution, which does not depend on $n$.  This stopping criterion leverages our assumption that our target data is binomial, and not Poisson, to reduce the space of possible $n$ values to a finite set.

We implement an Gibbs sampler, sampling from the uniform distribution $\textbf{U}$ randomly selected in such a way that there is a new solution set that is non-empty, i.e., satisfying  Equation \eqref{eq:DGEbeta}, or equivalently
\[ G_{n-Y_i,Y_i + 1} (1-p) \geq U_i > G_{n-Y_i+1,Y_i}(1-p),\quad i=1,\ldots,m. \]
In particular, each value $U_i$ is resampled iteratively by randomly selecting a new value from its conditional distribution given all the other $U$s; a uniform distribution on
\[ \big( \min\{ G_{n-Y_i+1,Y_i}\left(1-\frac{\tilde{\mu}^{lower}_{\tilde{n},j}}{\tilde{n}}\right): j \neq i\}, \max\{ G_{n-Y_i,Y_i+1}\left(1-\frac{\tilde{\mu}^{lower}_{\tilde{n},j}}{\tilde{n}}\right): j\neq i\}\big],\]
where $\tilde{n}$, $\tilde{\mu}^{lower}_{\tilde{n},j}$, and $\tilde{\mu}^{upper}_{\tilde{n},j}$ for $j = \{1,\ldots,m\}\backslash \{i\}$ are the solution set that we would get if we were to remove the $i$th observation from our data.
To speed up the computation we actually resample $U_i$s in batches, so that all $U_i$s corresponding to the same observed value of $Y_i$ are sampled together. 

Our investigations of the above method of resampling the uniform values $\mathbf{U}$ have revealed that there is a need for steps that would result large changes to  $n$ and $\mu$.   To allow for this, we add two Metropolis-Hastings steps at the end of each Gibbs sampler scan, one in the $n$ direction and one in the $\mu$ direction.  

The MH step in the $\mu$ direction was implemented using the following algorithm.  Assume that you already have finished the above protocol for choosing a new set of uniform values $\mathbf{U}$ and that has a corresponding solution set \[\bigcup_{\hat{n}} \{\hat{n}\} \times (\max\{\hat{\mu}_{1,\hat{n}}^{lower},\ldots,\hat{\mu}_{m,\hat{n}}^{lower}\},\min\{\hat{\mu}_{1,\hat{n}}^{upper},\ldots,\hat{\mu}_{m,\hat{n}}^{upper}\}].\]  Randomly select a single value $\mu^*$ from the set of $\hat{\mu}$ values associated with the smallest $\hat{n}$ \[(\max\{\hat{\mu}_{1,\min\{\hat{n}\}}^{lower},\ldots,\hat{\mu}_{m,\min\{\hat{n}\}}^{lower}\},\min\{\hat{\mu}_{1,\min\{\hat{n}\} }^{upper},\ldots,\hat{\mu}_{m,\min\{\hat{n}\}}^{upper}\}].\]  {\myfont Next, define a new value, $\mu^\star$, that is drawn from the proposal distribution $N(\mu^*, \sigma^2)$ for some predetermined $\sigma$.}  Calculate a set of uniform values, $\mathbf{U}^\star$, such that $\mu^\star$ is contained in the solution set.  Let $\dot{n}, \{\dot{\mu}^{lower}_{\min\{\dot{n}\},j}\}_{j=1}^m$, and $\{\dot{\mu}^{upper}_{\min\{\dot{n}\},j}\}_{j=1}^m$ be the values in the solution set created by using Equation \eqref{eq:DGEbeta} with $\mathbf{U}^\star$. The acceptance ratio for these new $\mathbf{U}^\star$ will then be 
\[ \frac{(\dot{\mu}^{lower}_{\min\{\dot{n}\},j}- \dot{\mu}^{upper}_{\min\{\dot{n}\},j}) \prod_{i=1}^m \left( G_{n-Y_i,Y_i+1}\left(1-\frac{\hat{\mu}^{lower}_{\hat{n},i}}{\hat{n}}\right) -  G_{n-Y_i+1,Y_i}\left(1-\frac{\hat{\mu}^{lower}_{\hat{n},i}}{\hat{n}}\right) \right)}{ (\hat{\mu}^{lower}_{\min\{\hat{n}\},j}- \hat{\mu}^{upper}_{\min\{\hat{n}\},j}) \prod_{i=1}^m \left( G_{n-Y_i,Y_i+1}\left(1-\frac{\dot{\mu}^{lower}_{\dot{n},i}}{\dot{n}}\right) -  G_{n-Y_i+1,Y_i}\left(1-\frac{\dot{\mu}^{lower}_{\dot{n},i}}{\dot{n}}\right) \right)}, \]

The MH step in the $n$ direction was implemented using the following algorithm.  Like the MH step in the $\mu$ direction, start by randomly selecting a single $\mu^*$ value from set of $\hat{\mu}$ values associated with the smallest $\hat{n}$, $\mu^* \in (\hat{\mu}^{lower}_{\min\{\hat{n}\},j}, \hat{\mu}^{upper}_{\min\{\hat{n}\},j})$.  Then, let $n^\star := \min\{\hat{n}\} + 1-2X$, where $X \sim Bernoulli(1/2)$, be your new proposed $n$ value.  Next, find a set of uniform values $\mathbf{U}^\star$ such that $n^\star$ and $\mu^\star$ are both in the solution set.  Let $\dot{n}, \{\dot{\mu}^{lower}_{\min\{\dot{n}\},j}\}_{j=1}^m$, and $\{\dot{\mu}^{upper}_{\min\{\dot{n}\},j}\}_{j=1}^m$ be the values in the solution set created by using Equation \eqref{eq:DGEbeta} with $\mathbf{U}^\star$.  The acceptance ratio of $\mathbf{U}^\star$ has the same form as the acceptance ratio for the MH step in the $\mu$ direction.

The specific details of the algorithm can be read from our implementation (in R) on GitHub [https://github.com/sirmurphalot/IntroductionGFI].  On this page we have also posted a full pseudocode version of the algorithm for the curious reader.
\end{appendix}

\bibliographystyle{rss}
\bibliography{examples}       


\end{document}